\def\BibTeX{{\rm B\kern-.05em{\sc i\kern-.025em b}\kern-.08em
    T\kern-.1667em\lower.7ex\hbox{E}\kern-.125emX}}
\newtheorem{lemma}{Lemma}
\newtheorem{remark}{Remark}
\newcommand{\E}{\ensuremath{\mathbb E}}
\newcommand{\q}{\ensuremath{\mathbf q}}
\renewcommand{\v}{\ensuremath{\mathbf v}}
\renewcommand{\u}{\ensuremath{\mathbf u}}
\newcommand{\s}{\ensuremath{\mathbf s}}
\renewcommand{\r}{\ensuremath{\mathbf r}}
\newcommand{\w}{\ensuremath{\mathbf w}}
\newcommand{\N}{\ensuremath{\mathcal N}}
\newcommand{\K}{\ensuremath{\mathcal K}}
\renewcommand{\P}{\ensuremath{\mathbf{ P}}}
\newcommand{\Q}{\ensuremath{\mathbf{Q}}}
\renewcommand{\O}{\ensuremath{\mathcal O}}
\def \treq {\stackrel{\tiny \Delta}{=}}
\def\@seccntformat#1{\@ifundefined{#1@cntformat}%
	{\csname the#1\endcsname\quad}
	{\csname #1@cntformat\endcsname}
	}
\newcommand{\removelatexerror}{\let\@latex@error\@gobble}
\definecolor{orcidlogocol}{HTML}{A6CE39}
\tikzset{
  orcidlogo/.pic={
    \fill[orcidlogocol] svg{M256,128c0,70.7-57.3,128-128,128C57.3,256,0,198.7,0,128C0,57.3,57.3,0,128,0C198.7,0,256,57.3,256,128z};
    \fill[white] svg{M86.3,186.2H70.9V79.1h15.4v48.4V186.2z}
                 svg{M108.9,79.1h41.6c39.6,0,57,28.3,57,53.6c0,27.5-21.5,53.6-56.8,53.6h-41.8V79.1z M124.3,172.4h24.5c34.9,0,42.9-26.5,42.9-39.7c0-21.5-13.7-39.7-43.7-39.7h-23.7V172.4z}
                 svg{M88.7,56.8c0,5.5-4.5,10.1-10.1,10.1c-5.6,0-10.1-4.6-10.1-10.1c0-5.6,4.5-10.1,10.1-10.1C84.2,46.7,88.7,51.3,88.7,56.8z};
  }
}
\newcommand\orcidicon[1]{\href{https://orcid.org/#1}{\mbox{\scalerel*{
\begin{tikzpicture}[yscale=-1,transform shape]
\pic{orcidlogo};
\end{tikzpicture}
}{|}}}}
\begin{document}

\title{Terahertz Meets Untrusted UAV-Relaying:\\
Minimum Secrecy Energy Efficiency Maximization via Trajectory and Communication Co-design}

\author{\vspace{-1mm}Milad Tatar Mamaghani\textsuperscript{\orcidicon{0000-0002-3953-7230}}\,,~\IEEEmembership{Graduate Student Member,~IEEE}, Yi Hong\textsuperscript{\orcidicon{0000-0002-1284-891X}}\,,~\IEEEmembership{Senior Member,~IEEE}%
 \thanks{
  Copyright~(c)~2015 IEEE. Personal use of this material is permitted. However, permission to use this material for any other purposes must be obtained from the IEEE by sending a request to pubs-permissions@ieee.org. 
 \par Milad   Tatar   Mamaghani   and   Yi   Hong   are   with   the   Department of   Electrical   and Computer   Systems   Engineering,   Faculty of   Engineering,   Monash   University,   Melbourne,   VIC   3800,   Australia   (corresponding author e-mail:   milad.tatarmamaghani@monash.edu). 
  This research is supported by the Australian Research Council under Discovery Project DP210100412.
}
}
\maketitle
\begin{abstract}
Unmanned aerial vehicles (UAVs) and Terahertz (THz) technology are envisioned to play paramount roles in next-generation wireless communications.  In this paper, we present a novel secure UAV-assisted mobile relaying system operating at THz bands for data acquisition from multiple ground user equipments (UEs) towards a destination. We assume that the UAV-mounted relay may act, besides providing relaying services, as a potential eavesdropper called the untrusted UAV-relay (UUR). To safeguard end-to-end communications, we present a secure two-phase transmission strategy with cooperative jamming. Then, we devise an optimization framework in terms of a new measure $-$ secrecy energy efficiency (SEE), defined as the ratio of achievable average secrecy rate to average system power consumption, which enables us to obtain the best possible security level while taking UUR's inherent flight power limitation into account. For the sake of quality of service fairness amongst all the UEs, we aim to maximize the minimum SEE (MSEE) performance via the joint design of key system parameters, including UUR's trajectory and velocity, communication scheduling, and network power allocation. Since the formulated problem is a mixed-integer nonconvex optimization and computationally intractable, we decouple  it  into  four  subproblems and propose alternative algorithms to solve it efficiently  via greedy/sequential block successive convex approximation and non-linear fractional programming techniques. Numerical results demonstrate significant MSEE performance improvement of our designs compared to other known benchmarks.
\end{abstract}

\begin{IEEEkeywords}
UAV, THz, untrusted aerial relaying, physical layer security, minimum secrecy energy efficiency, trajectory design, resource allocation, convex optimization.
\end{IEEEkeywords}

\section{Introduction}
\lettrine[lines=2]{T}{he} unmanned aerial vehicle (UAV) has recently been recognized  as one of the major technological breakthroughs to be pervasively applied in 5G-and-beyond wireless communication networks, supporting massive machine-type communications, internet of things (IoT), and artificial intelligence (AI)-empowered communications \cite{jiang2021road, akyildiz20206g, geraci2021will}. Thanks to the unique characteristics of agility, on-demand swift deployment, versatility, and channel superiority amongst the other potentialities, UAV-aided wireless communications have recently attracted a great deal of research \cite{zeng2019accessing, Mozaffari2019Tutorial, wu2021comprehensive, li2018uav, zeng2016wireless, zhang2020multi}. Despite numerous advantages, the open nature of air-ground (AG) links inevitably makes such systems vulnerable to malicious attacks such as eavesdropping. Accordingly, the security and confidentiality of such promising wireless communication systems are of utmost concern and undeniable requirements. To protect the confidentiality of UAV communications against hostile entities, one promising technique is the physical layer security (PLS) that uses the characteristics of wireless channels and applies communication techniques to combat attacks without complex encryption. A number of works have found leveraging the PLS in UAV-aided communications plausibly effective  \cite{wang2019uav, sun2019physical, wang2020energy, tatarmamaghani2021joint, xu2020low, Hongliang2018Sec, Li2019Coo, mamaghani2020improving, Lee2018UAV, wu2021uav, mamaghani2019performance, Wang2018Joi, song2018joint, yuan2019joint, mamaghani2020intelligent}.  For example, PLS has been exploited in a wireless-powered UAV-relay system to combat eavesdropping via maximizing secrecy rate by a joint design of UAV's position and resource allocation \cite{wang2020energy}. Other efforts were made to maximize the average secrecy rate (ASR) via joint trajectory and communication design for UAV-standalone wireless system \cite{tatarmamaghani2021joint, xu2020low, Hongliang2018Sec}, for double-UAV with external jamming \cite{Li2019Coo, mamaghani2020improving, Lee2018UAV}, and for secure UAV-relaying scenarios \cite{wu2021uav, mamaghani2019performance, Wang2018Joi, song2018joint, yuan2019joint, mamaghani2020intelligent}. The majority of previous research has deemed the UAV to be a fully authorized and legitimate communication node in UAV-assisted relaying applications. However, when the UAV behaves as an \textit{untrusted relay}, which is called untrusted UAV-relay (UUR), with the capability of information eavesdropping while assisting end-to-end communications (see \cite{nuradha2019physical, Mamaghani2018sec}), the system design becomes quite challenging and entirely different from the existing body of research.

Further, energy efficiency is another imperative need for UAV-aided communications due to UAVs' inherent constraints on size, weight, and power (SWAP). Typically, the small-scale rotary-wing UAVs are powered via limited on-board batteries, leading to a restrictive operational lifetime, which undoubtedly impacts their overall system performance. Nonetheless, UAVs' flight endurance, if properly designed, can be enhanced to a considerable extent \cite{zeng2019energy}. Most recently, some works have studied the secrecy performance of UAV-aided systems considering the propulsion energy consumption constraint \cite{zhang2021dual, cai2020joint, xiao2019secrecy, miao2021secrecy}.  In \cite{zhang2021dual}, the authors have investigated ASR maximization for a cooperative dual-UAV secure data collection with propulsion energy limitation. Exploring the problem of secrecy energy efficiency (SEE) maximization for UAV-aided wireless systems is another research path \cite{cai2020joint, xiao2019secrecy, miao2021secrecy}.  The authors have designed both trajectory and resource allocation for the energy-efficient secure UAV communication system with the help of a multi-antenna UAV-jammer in \cite{cai2020joint}.  Some appropriate system designs have been conducted for the SEE improvement of a single UAV-relay system  \cite{xiao2019secrecy}, and a UAV-swarm multi-hop relaying scenario \cite{miao2021secrecy}. It is worth pointing out that all the aforementioned designs have only aimed to combat external terrestrial eavesdroppers.

On the other hand, owing to the ultra-broad bandwidth at the terahertz (THz) frequency range ($0.1-10~\mathrm{THz}$), THz transmission has been acknowledged as a promising technology capable of catering an explosive growth of user demand of higher mobile traffic for future wireless systems \cite{chen2019survey}. However, THz links incur severe path loss and high susceptibility to environmental blockage, and molecular absorption \cite{jiang2021road, chaccour2021seven, wu2021comprehensive}, which limit signal propagation distance and coverage range. To overcome the hindrances, one possible solution might be exploiting UAV-aided communications in THz links. Notably, in the context of THz-UAV systems, few initial research studies have thus far been conducted. The coverage probability of the UAV-THz downlink communications was analyzed in \cite{wang2020performance}, while \cite{xu2021joint} has explored a similar non-security scenario with a focus on minimizing communication delay by the joint design of a UAV's location and power control. When it comes to security issues of such high-frequency systems, despite the widely-assumed improved resiliency against eavesdropping of THz links, the authors of \cite{ma2018security} have characterized the possibility of eavesdropping attacks for such systems. Needless to mention that even considering negligible information leakage towards the external malicious eavesdroppers through THz transmissions, the scenarios involving untrusted relays, particularly the UUR systems, may still be vulnerable to eavesdropping. The appropriate design for such systems has yet to be understood; therefore, one needs to design novel frameworks to enable the efficient deployment of THz-UUR wireless systems.
 
\subsection{Our contributions}
To the best of our knowledge, this is the first work addressing the energy-efficient secure design of a THz-UUR wireless communication system to guarantee confidentiality and transmission secrecy with the least system power consumption. Our detailed contributions are summarized below.
\begin{itemize}
    \item 
    We present an UUR-enabled wireless communication system for data collection from multiple ground user equipments (UEs) towards the base station (BS) over THz-based AG links. We adopt a secure two-phase transmission strategy using destination-assisted cooperative jamming (DACJ) to improve security. 
    
    \item Then, we formulate a maximization problem in terms of a new measure {\em minimum secrecy energy efficiency} (MSEE), defined as the minimum ratio of achievable ASR to average system power consumption. This optimization problem leads to a joint design of key system parameters, including UUR's trajectory and velocity, communication scheduling, and network transmission power allocations.
    
    \item
    Since the optimization  problem is originally intractable due to non-convexity, we decompose it into four subproblems and then solve each inspired by the successive convex approximation (SCA) or Dinkelbach fractional programming techniques. Further, we propose two computationally efficient algorithms according to the sequential and maximum improvement (MI) based block coordinate descent (BCD) approaches with guaranteed convergence to at least a suboptimal solution. We also thoroughly conduct computational and complexity analysis and show that our solution can be obtained in polynomial time order, making it applicable to the energy-hungry UAV-based scenarios.

    \item
    We conduct extensive simulations to verify the analyses and demonstrate the effectiveness of our proposed designs in terms of MSEE compared to some other benchmarks; i.e., without communication resource allocation design or trajectory and velocity optimization and ignoring flight power consumption. We also investigate the impact of some fundamental setting parameters such as the flight mission time, the molecular adsorption factor, the average network transmit power budget, and the flight power limit  on the overall system performance. 

\end{itemize}

\textit{Notation}: We use bold lower case letters to denote vectors.
$\|\cdot\|$ denotes Frobenius norm; $\E\{x\}$ stands for
the expectation over the random variable (r.v.) $x$; $\mathcal{CN}(\mu, \sigma^2)$ denotes a circularly symmetric complex Gaussian r.v. with mean $\mu$ and variance $\sigma^2$;  $(x)^+ = \max\{x,0\}$ and $\max$ stands for the maximum value; $\lfloor x \rfloor$ is the smallest integer that is larger than or equal to $x$; $\mathcal{O}(\cdot)$ denotes the big-O notation.

The rest of the paper is organized as follows. Section~\ref{sec:sysmodel} introduces system model and formulates the problem of interest. In Section~\ref{sec:solution}, we present efficient iterative algorithms to solve the optimization problem, followed by numerical results and discussions given in Section \ref{sec:numerical}. Finally, we draw our conclusions in Section~\ref{sec:conclusion}.


\section{System model and problem formulation} \label{sec:sysmodel}
\begin{figure}[t]
\centerline{\includegraphics[width= 0.8\columnwidth]{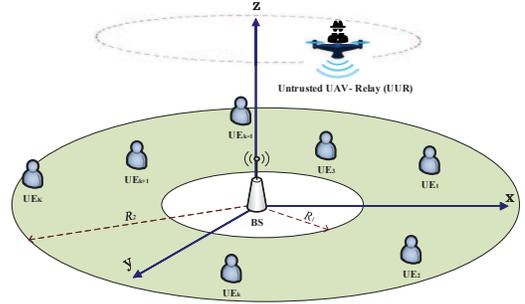}}
\caption{\footnotesize System model of secure untrusted mobile UAV-relaying via THz communications with cooperative jamming.}
\label{fig1_sysmodel}
\end{figure}

We consider a UAV-enabled wireless communication system for data collection from a set of $K$ ground UEs towards a BS via a UAV-assisted mobile amplify-and-forward (AF) relay, as shown in Fig. \ref{fig1_sysmodel}. Here we assume that there are no reliable direct links from UEs to BS (see \cite{Mamaghani2018sec, mamaghani2017secure} and references therein), and all nodes are equipped with a {\em single antenna}, operating in {\em half-duplex} mode. Therefore, a UAV-relay is employed to assist end-to-end communications \cite{mamaghani2019performance}; nonetheless, the UAV-relay may not be fully authorized to access collected confidential information and may conduct malicious eavesdropping, i.e., a UUR \cite{nuradha2019physical}. Thus, secure data transmission is in demand.

\begin{remark}
Note that in this work, we consider a single-antenna untrusted UAV-relaying scenario focusing on the secure energy-efficient design. Extension to energy-constraint multiple-antenna UAV \cite{zhang2020multi} under THz bands is intriguing but challenging, which can be investigated in future works, particularly taking into account the following factors. Firstly, the multi-input multi-output (MIMO) signal processing 
not only brings on higher complexity with the hardware costs due to several active elements and radio frequency (RF) chain, but also more energy consumption, which is constrained by the UAV's SWAP limitations, and should be carefully considered in the future designs. Further, the lack of rich scattering in the UAV environment, compared to the terrestrial communication systems,  considerably limits the spatial multiplexing gain of MIMO, leading to only marginal rate improvement over single-antenna UAV systems \cite{zeng2016wireless}. Last but not least, due to UAV's highly dynamic environment, it would be particularly challenging to achieve transceiver beam-alignment for directional beamforming, while otherwise, beam misalignment degrades the multi-antenna diversity gain.
\end{remark}



Without loss of generality, we consider a three-dimensional (3D) Cartesian coordinate system, where the BS's horizontal coordinate is located at the origin $\q_b=[0,0] \in \mathbb{R}^{1\times2}$, and the ground UEs with horizontal fixed\footnote{
Note that in a more realistic environment all the network nodes, including terrestrial UEs, apart from the UUR, might be mobile depending on the application \cite{khuwaja2019effect}. However, considering the high velocity and flexibility of the UAV compared to the ground nodes, we consider the ground terminals quasi-stationary throughout the flight mission with unchanged coordinates.} coordinates $\q_{k}=[x_k, y_k] \in \mathbb{R}^{1\times2}$ for $\forall k\in\K$, where $\K=\{1, 2,\cdots, K\}$, are randomly distributed in a circular annulus region with the inner radius $R_1$ and outer radius $R_2$ and the coordinates are assumed to be known in prior. Here, $R_1$ is considered to be the largest distance at which a reliable uplink transmission can be obtained, while beyond $R_1$ in our case implies no direct link between UE and BS. Further, $R_2$ indicates the boundary of {\em the permitted flying region} for the UAV to provide communication service. 

We also consider that UAV flies from and back to the same specific point over the region of interest for a duration of $T$ seconds in order to provide relaying services to all UEs with fairness. This specific point may refer to the checkup point wherein the UAV gets recharged and physically examined to maintain its service. Assuming that UAV flies at a fixed altitude\footnote{Fixed-altitude operation can be justified  from a practical viewpoint in order for UAV to cut off extra energy consumption arising from ascending or descending \cite{Hongliang2018Sec}, \cite{Li2019Coo}.} $H$ meters whose instantaneous horizontal coordinate and velocity is represented by $\q(t)=[x(t), y(t)]$ and $\v(t)\treq\diff[1]{\q(t)}{t}$, respectively, where $0 \leq t \leq T$. For the ease of analysis, we adopt the time-slotted system  such that the flight duration $T$ is equally discretized into $N$ sufficiently small time slots of duration $\delta_t \treq \frac{T}{N}$. Hence, the UAV's horizontal location at time slot $n\in \N =\{1,\cdots,N\}$ can be approximated by $\q[n]=[x[n], y[n]]$. This assumption is valid when $d^{max}_{t} \treq \delta_t v^{max}_u \ll H$, wherein $d^{max}_{t}$ denotes the maximum UAV's displacement per time slot.


\subsection{Channel model}
We assume that the AG links are over THz channels, which are mainly affected by both free space spreading loss and the molecular absorption according to \cite{xu2021joint}. Further, as per widely used assumption in the body of existing UAV literature, e.g., \cite{zeng2017energy,cai2020joint,tatarmamaghani2021joint}, the Doppler effect due to the UAV mobility is also considered to be perfectly compensated for ease of exposition in the sequel. Therefore, assuming that at each time slot $n$ the channel state information is regarded static due to adequately small $\delta_t$, we adopt the line-of-sight (LoS) dominant time-varying THz channel power gain model, similar to \cite{xu2021joint}, between the UUR and any UE $k \in \K$ as 
\begin{align}\label{huk}
    h_{ku}[n] = \frac{\beta_0 \exp(-a_f d_{ku}[n])}{d^2_{ku}[n]},~\forall n
\end{align}
where $d_{ku}[n]$ denotes the Euclidean distance between the UUR and the $k$-th UE, given by
\begin{align}
    d_{ku}[n] = \sqrt{\|\q[n]-\q_k\|^2+H^2},~\forall n
\end{align}
Note that the multiplicative term $\exp(-a_f d_{ku})$ in \eqref{huk} is the indication of excessive path loss of THz links due to water vapor molecular absorption effect\footnote{It is worth stressing that THz transmissions generally admit various peculiarities such as molecular absorption effect, spectral broadening, frequency selectivity, and so forth \cite{tekbiyik2020holistic, chaccour2021seven}. In light of this, to confront the high frequency-selectivity nature in the THz band, the total bandwidth of the THz frequencies is generally divided into several sub-bands \cite{pan2021uav}. Therefore, this work considers only one sub-band equally shared amongst communication nodes with the associated carrier frequency, and the molecular absorption effect is the solely peculiar trait we take into account in this work for ease of exposition as in \cite{xu2021joint, pan2021uav}.}, wherein $a_f$ is the frequency and chemical composition of air dependent adsorption factor \cite{boulogeorgos2018distance}. It should be also pointed out that the main cause of absorption loss in THz frequency ranges is the water vapor molecules that cause discrete, but deterministic loss to the signals in the frequency domain.
Further, $\beta_0 \treq (\frac{C}{4\pi f})^2$ denotes the reference channel power gain at unit distance, wherein $C$ is the speed of light, $f$ is the operating frequency. Likewise, the THz channel power gain between the UUR and the BS can be written as 
$h_{bu}[n] = \frac{\beta_0 \exp(-a_f d_{bu}[n])}{d^2_{bu}[n]}$, where $d_{bu}[n] = \sqrt{\|\q[n]-\q_b\|^2+H^2_u},~\forall n$.
\subsection{Constraints on user scheduling, power, UAV's mobility}
We adopt the time division multiple access (TDMA) protocol for multi-user relaying services, wherein UUR serves at most one enabled UE at $n$-th time slot, while the other ground UEs keep silent. Therefore, letting $\zeta_k[n]$ be a binary user scheduling variable for UE $k\in \K$ at time slot $n\in \N$, we have the user scheduling constraints as
\begin{align}
&\mathrm{C1}:\quad    \zeta_k[n] \in \{0, 1\}, \quad\forall k, n\\
&\mathrm{C2}:\quad    \sum_{k\in\K} \zeta_k[n] \leq 1,\quad\forall n
\end{align}
where $\zeta_k[n]=1$ if UE $k$ is scheduled at time slot $n$, and $\zeta_k[n]=0$, otherwise.
Further, the transmit powers of the UUR, the BS, and $k$-th user in time slot $n$, denoted respectively as $p_u[n], p_b[n]$, and $p_k[n]$, are generally subject to average and peak transmit powers given as
\begin{align}
&\mathrm{C3}:\frac{1}{N}\sum_{n=1}^{N}\sum_{k=1}^{K}\zeta_k[n]p_k[n] \leq p^{ave}_k,\\
&\mathrm{C4}:~0 \leq p_k[n] \leq p^{max}_k, \quad\forall k, n\\
&\mathrm{C5}:\frac{1}{N}\sum_{n=1}^{N}p_u[n] \leq p^{ave}_u,\\
&\mathrm{C6}:~0 \leq p_u[n] \leq p^{max}_u,\quad\forall n \\
&\mathrm{C7}:\frac{1}{N}\sum_{n=1}^{N}p_b[n] \leq p^{ave}_b,\\
&\mathrm{C8}:~0 \leq p_b[n] \leq p^{max}_b, \quad\forall n
\end{align}
where sets $\{p^{ave}_u, p^{ave}_b, p^{ave}_k, \forall k \}$ and $\{p^{max}_u, p^{max}_b, p^{max}_k, \forall k \}$ represent the corresponding  average and maximum  power constraints of the network nodes.

The mechanical power consumption of the energy-limited UAV due to high demand of propulsion energy for aerial operation with fixed-altitude flight can be approximately given by \cite{zeng2019energy}
\begin{align}\label{flightpow}
    P_{f}[n] &= \stackrel{}{\underset{\text{blade profile}}{\underbrace{{P_0\left(1+\frac{3\|\v[n]\|^2}{\Omega^2_uR_u^2}\right)}}}} +\stackrel{}{\underset{\text{induced}}{\underbrace{{\frac{1}{2}d_0\rho sA_u \|\v[n]\|^3}}}} \nonumber\\
    &+ \stackrel{}{\underset{\text{parasite}}{\underbrace{{P_i\left(\sqrt{1+\frac{\|\v[n]\|^4}{4\nu^4_0}} - \frac{\|\v[n]\|^2}{2\nu^2_0}\right)^{\frac{1}{2}}}}}}, \quad\forall n
\end{align}
wherein $\v[n]$ is the UAV's instantaneous velocity at time slot $n$, $P_0$ and $P_i$ are two constants representing UAV's \textit{blade profile power} and \textit{induced power} in hovering mode, respectively, $\Omega_u$ and $R_u$ are the UAV's blade angular velocity in Radian per second (rad/s) and its rotor radius in meter (m), $d_0, \rho, s$, and $A$ indicate the unit-less fuselage drag ratio, air density in $\mathrm{kg}/\mathrm{m}^3$, rotor solidity, and rotor disk area in $\mathrm{m}^2$, respectively. Further, the average rotor induced velocity in hovering is shown as  $\nu_0$. Thus, we have the average flight power consumption constraint as
\begin{align}
\mathrm{C9}:\frac{1}{N}\sum_{n=1}^{N}P_{f}[n] \leq \bar{P}_{lim},
\end{align}
wherein $\bar{P}_{lim}$ indicates the UAV's average propulsion power budget, which is proportional to the UAV's on-board battery capacity. Therefore, it should be required that the \textit{total consumed propulsion energy} by the UAV over $N$ time slots be less than such limit in order for network functioning. Further, the considered scenario should be subject to UAV's mobility constraints in terms of initial and final locations for cyclic path,  in-flight maximum displacement per time slot for satisfying channel invariant assumption, and permitted flying region as 
\begin{align}
&\mathrm{C10}:\quad    \q[0] = \q[N] = \q_I,\nonumber\\
&\mathrm{C11}:\quad    \q[n+1] = \q[n] +\v[n]\delta_t,\quad\forall n \setminus N\nonumber\\
&\mathrm{C12}:\quad\|\v[n]\|\leq v^{max}_u,\quad\forall n\nonumber\\
&\mathrm{C13}:\quad    \|\v[n+1] - \v[n]\| \leq a^{max}_u,\quad\forall n \setminus N\nonumber\\
&\mathrm{C14}:\quad    \|\q[n] - \q_b\| \leq R_2,\quad\forall n
\end{align}
wherein $\q_I$ indicates UAV's initial and final location per flight, $v^{max}_u$ and $a^{max}_u$ are the UAV's maximum speed and acceleration, respectively.

\subsection{Secure transmission strategy, problem formulation}
For the purpose of wireless security, we adopt a {\em secure two-phase transmission strategy with destination-assisted cooperative jamming} (DACJ) technique similar to \cite{mamaghani2019performance, Mamaghani2018sec, tatarmamaghani2021joint}. In the first phase, at each time slot $n$, the scheduled UE $k$ sends confidential information to UUR, and simultaneously the BS jams UUR.
As such, the received signal at UUR in time slot $n$ can be represented by
\begin{align}
    \hspace{-4mm}x_u[n] \hspace{-1mm}=\hspace{-1mm} \sqrt{p_k[n]} h_{ku}[n] s_k[n] \hspace{-1mm}+\hspace{-1mm} \sqrt{p_b[n]} h_{bu}[n] s_b[n] \hspace{-1mm}+ \hspace{-1mm}n_u[n],~\forall n
\end{align}
where $s_k[n]$ is the normalized information signal, i.e. $\E\{\|s_k[n]\|^2\}=1, \forall k, n$. Plus, $s_b[n] \sim \mathcal{CN}(0,1)$ represents BS's jamming transmission with unit power, and $n_u[n] \sim  \mathcal{CN}(0,\sigma^2_{u})$ denotes the additive white Gaussian noise (AWGN) at UUR. In the second phase, UUR forwards an amplified version of the received signals using AF relaying protocol to the BS. Here we assume that the amplification process is based on the full channel state information of UE-UUR links, i.e., the exact locations of UEs are known to the UUR, thus the normalized variable-gain of AF relaying from UE $k$ can be obtained as
\begin{align}
    \hspace{-3mm}G_k[n] \hspace{-1mm}=\hspace{-1mm} \sqrt{\frac{1}{p_k[n] \|h_{ku}[n]\|^2 \hspace{-1mm}+\hspace{-1mm} p_b[n] \|h_{bu}[n]\|^2 \hspace{-1mm}+\hspace{-1mm} N_0}},~\forall k, n
\end{align}
where $N_0$ indicates the noise power at UUR.
Accordingly, the resultant signal at the BS, after self-interference cancellation \cite{Mamaghani2018sec}, can be expressed as
\begin{align}
    y_b[n] &= \sqrt{p_k[n]p_u[n]}h_{ku}[n]h_{bu}[n] G_k[n] s_k[n]\nonumber\\
    &+ \sqrt{p_u[n]}G_k[n]h_{bu}[n]n_u[n] + n_b[n],~\forall n
\end{align}
where $n_b[n] \sim  \mathcal{CN}(0,\sigma^2_b)$ indicates the AWGN at the BS.
Under such setting, given the equally shared communication bandwidth $B$ Hz, the achievable end-to-end instantaneous data rate in bits-per-second (bps) from the $k$-th UE towards the BS at time slot $n$ is given by
\begin{align}
    R^{k}_{b}[n] &\hspace{-1mm}=\hspace{-1mm} \zeta_k\hspace{-0.5mm}[n]B\log_2\hspace{-1mm}\left(\hspace{-1mm}1\hspace{-1mm}+\hspace{-1mm}\frac{p_k[n] g_{ku}[n] p_u[n]g_{bu}[n]}{\left(p_u[n]\hspace{-1mm}+\hspace{-1mm}p_{b}[n]\right)g_{bu}[n] \hspace{-1mm}+\hspace{-1mm} p_k[n] g_{ku}[n] \hspace{-1mm}+\hspace{-1mm} 1}\hspace{-1mm}\right)\hspace{-0.5mm},
\end{align}
wherein $g_{ku}[n] \treq \frac{h_{ku}[n]}{N_0}$ and $g_{bu}[n] \treq \frac{h_{bu}[n]}{N_0}$, and $N_0\treq B\sigma^2_{u(b)}$. Then the UUR may overhear the confidential information with an achievable wiretap secrecy rate per Hz at time slot $n$ as
\begin{align}\label{ave_rsec}
    R^{k}_{u}[n] =  \zeta_k[n]B\log_2\left(1+\frac{p_k[n] g_{ku}[n]}{p_b[n]g_{bu}[n] + 1}\right),
\end{align}
 $N_0 \treq B\sigma^2_{u(b)}$ indicates the equal noise power at the receivers, which is assumed for simplicity of exposition. 

We adopt the ASR as one of the key secrecy metrics and the ASR of $k$-th UE at time slot $N$ is
\begin{align}\label{asr_k}
    \bar{R}^{k}_{sec} = \frac{1}{N}\sum^{N}_{n=1}\left[\frac{1}{2}\left(R^{k}_{b}[n]  - R^{k}_{u}[n]\right)^+\right] ~\text{bps}
\end{align}
wherein $(x)^+\treq\max\{x,0\}$, and the ratio $\frac{1}{2}$ is due to the fact that secure transmission is done in two phases of equal duration at each time slot. The achievable average  information bits can securely be exchanged between $k$-th UE and BS is 
$B^{k}_{sec} = \delta_t \sum^{N}_{n=1}{R}^{k}_{sec}[n].$ 

To fully exploit the capability of aerial platforms for communication, the limited energy resource must be considered in system design. Therefore, the total energy consumption of the UAV generally consists both the communication-related energy as well as mechanical-related propulsion energy to support the UAV's hovering and mobility. In practice, the UAV's propulsion power consumption is much higher than those used for communication purposes such as UEs' signal transmission, BS's jamming, and signal processing as well as other circuitry. Hence, we approximate the network's total power consumption by the amount that the UAV consumes for the propulsion purpose. Consequently, for the secrecy metric, we define {\em secrecy energy efficiency} (SEE) of the proposed scheme for the $k$-th UE as the {\em ratio of the achievable ASR to the approximated average system power consumption} as
\begin{align}\label{see_k}
    \mathbf{SEE}^{k}(\pmb{\zeta}, \Q, \P) \treq \frac{\bar{R}^{k}_{sec}}{\frac{1}{N}\sum^{N}_{n=1}P_{f}[n]},~\text{bits/Joule}
\end{align}
wherein the user scheduling set $\pmb{\zeta}= \{\zeta^k[n], \forall n, k\}$, UAV's location and velocity set $\mathbf{Q} = \{\q[n], \v[n],  \forall n \}$, and  network transmit power set $\P= \{\P_k=\{p_k[n], \forall k, n\}, \P_u=\{p_u[n], \forall n\}, \P_b=\{p_b[n], \forall n\}\}$ are the involving parameters. 
\begin{remark}
It is worth pointing out that for later analysis we utilize the normalized metrics, i.e., the numerator and denominator of \eqref{see_k} are divided by $B$ and $\bar{P}_{lim}$, respectively, to well balance numerical values of both metrics in SEE.
\end{remark}
To design the network so as to obtain the best performance among UEs and provide a fair quality of service (QoS) to all UEs given UAV's stringent on-board battery, we maximize the minimum SEE (MSEE) performance of the system by 
\begin{align}\label{opt_prob}
(\mathrm{P}):& \stackrel{}{\underset{\pmb{\zeta}, \mathbf{Q}, \mathbf{P}}{\mathrm{max}}~~\min_{k\in\K}  \mathbf{SEE}^{k}(\pmb{\zeta}, \mathbf{Q}, \mathbf{P}) } \nonumber\\
&~~\text{s.t.}~~~~~ \mathrm{C1-C14,}
\end{align}
We note that the problem $\mathrm{(P)}$ is a mixed-integer non-convex optimization problem, which is too hard to solve optimally. The non-convexity is mainly due to the non-concave objective function with respect to (w.r.t) the optimization variables, and also having the non-smoothness operator $(\cdot)^+$ and the non-convex constraints $\mathrm{(C1)}$, $\mathrm{(C3)}$, and $\mathrm{(C9)}$. Indeed, the major challenge in solving $(\mathrm{P})$ arises from the binary user scheduling constraint $\mathrm{C1}$ and the highly coupled optimization variables in the objective function in fractional form. To make it tractable, we first remove the operator $(\cdot)^+$ from the numerator of the objective function, since the value of the objective function should be non-negative at the optimal point; otherwise, one can set, e.g., $\mathbf{P}_k=\mathbf{0}, \forall k$ and get zero MSEE performance without modifying the original problem. Nonetheless, having at least a differentiable objective function, the problem is still non-convex, thereby no standard approach to solve it efficiently.  To remedy this issue, we first handle the binary constraint as per the approach in \cite{wu2018joint}, by relaxing $\mathrm{C1}$  into continuous constraint. Then, we propose some computationally efficient algorithms to iteratively solve a sequence of approximated convex subproblems by adopting several techniques such as block coordinated descent (BCD), successive convex approximation (SCA), and nonlinear fractional Dinkelbach programming, discussed below.


\section{Proposed Iterative Solution}\label{sec:solution}
In this section, we split the problem $\mathrm{(P)}$ into four subproblems with different blocks of variables, then solve each block by block, while keeping the other blocks unchanged. Specifically, we delve into solving the joint user scheduling and transmit power optimization subproblem to optimize $(\pmb{\zeta}, \P_k)$, relaying and jamming power optimization subproblems to improve $\P_u$ and $\P_b$, and lastly, the joint trajectory and velocity optimization subproblem to optimize $\mathbf{Q}$. Then, the overall algorithms to iteratively attain the approximate solution of \eqref{opt_prob} will be given.
\subsection{Joint user scheduling and transmit power optimization}

First, we relax binary variables $\pmb{\zeta}$ into continuous real-valued set $\tilde{\pmb{\zeta}}=\{\tilde{\zeta_k}[n], \forall k, n\}$.
The relaxed version of $\pmb{\zeta}$ serves, indeed, as a time sharing factor for $k$-th UE at time slot $n$. Such a relaxation in general leads the objective value of the relaxed problem to be asymptotically tight upper-bounded by that of the original binary-constrained problem \cite{wu2018joint}. Next,  we define the auxiliary variables $\tilde{\P}_k= \{\tilde{p}_k[n] \treq p_k[n] \tilde{\zeta_k}[n], \forall k, n\}$. By introducing a slack variable $\psi$, the corresponding subproblem can be equivalently represented as
\begin{subequations}\label{usrpowsch_subprob}
\begin{align}
(\mathrm{P1}):& \stackrel{}{\underset{\psi, \tilde{\pmb{\zeta}}, \tilde{\P}_k}{\mathrm{maximize}}~~\psi} \nonumber\\
\text{s.t.}~~~\begin{split} \sum^{N}_{n=1}&\stackrel{}{\underset{\text{Term I}}{\underbrace{{\tilde{\zeta}_k[n]
\ln\left(1+\frac{C_{n}\tilde{p}_k[n]}{\tilde{p}_k[n]+D_{k, n}\tilde{\zeta}_k[n]}\right)}}}}\\
&-\stackrel{}{\underset{\text{Term II}}{\underbrace{{{\tilde{\zeta}_k[n]\ln\left(1+B_{k, n}\frac{\tilde{p}_k[n]}{\tilde{\zeta}_k[n]}\right)}}}}}  \geq \frac{\psi}{\lambda_1},\quad\forall k  
 \end{split}  \label{cvx_p1}\\
&\frac{1}{N}\sum_{n=1}^{N}\sum_{k=1}^{K}\tilde{p}_k[n] \leq p^{ave}_k,\label{19b}\\
&0 \leq \tilde{p}_k[n] \leq \tilde{\zeta_k}[n]p^{max}_k,\quad\forall k, n \\
& \sum_{k\in\K}  \tilde{\zeta_k}[n]\leq 1,\quad\forall n\\
& 0 \leq \tilde{\zeta_k}[n] \leq 1,\quad\forall k, n \label{19e}\\
& \sum^{N}_{n=1}\sum^{K}_{k=1}\left(\tilde{\zeta}_k[n] - \tilde{\zeta}^2_k[n]\right) \leq  0, \label{P1_binary}
\end{align}
\end{subequations}
where
\begin{alignat*}{2}
{\lambda_1} &= \frac{B}{2\ln 2 \sum^{N}_{n=1}P_{f}[n]} &B_{k, n} = \frac{g_{ku}[n]}{p_b[n]g_{bu}[n]+1} \nonumber\\
C_{n} &= g_{bu}[n] p_u[n] &D_{k, n} = \frac{g_{bu}(p_u[n]+p_b[n])+1}{g_{ku}[n]}.
\end{alignat*}
Note that the constraint \eqref{cvx_p1} should be met with equality at the optimal point; otherwise, the value of the objective function in problem $(\mathrm{P1})$ can still be increased by increasing $\psi$, which violates the optimality. Furthermore, jointly considering the constraints \eqref{19e} and \eqref{P1_binary} ensures that $\tilde{\zeta_k}[n]\in\{0,1\},~\forall k, n$. The subproblem $(\mathrm{P1})$ is still non-convex due to non-convexity of the constraints \eqref{cvx_p1} and \eqref{P1_binary} and for general $N$, it is indeed NP-hard. Therefore, we cannot solve it efficiently.  To handle \eqref{cvx_p1}, we first present Lemma \ref{lemma1} below.

\begin{lemma}\label{lemma1}
Define the bivariant functions $Z_1(x,y; a, b)\triangleq x\ln(1+\frac{ay}{y+bx})$ and $Z_2(x, y; c) \treq x\ln(1+\frac{cy}{x})$ over the domain $x, y>0$ with the positive constants, i.e., $a, b, c > 0$. Both $Z_1$ and $Z_2$ are jointly concave w.r.t the variables $x$ and $y$. Additionally, the inequality below near the given point $(x_0,y_0)$ always holds with tightness:
\begin{align} \label{Z2_approx}
    Z_2(x, y; c) &\leq  x_0\big(1+c\frac{y_0}{x_0}\big) \nonumber\\
    &\hspace{-10mm}+ 
    \left(\ln\big(1+c\frac{y_0}{x_0}\big) - \frac{cy_0}{x_0 + cy_0}  \right) (x-x_0) \nonumber\\
     &\hspace{-10mm}+ \left( \frac{cx_0}{x_0 + cy_0}\right)(y-y_0) \treq f^{ub}_1(x,y; x_0, y_0, c),
\end{align}

\begin{proof}
Please see Appendix \ref{Appendix A}.
\end{proof}
\end{lemma}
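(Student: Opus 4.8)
The plan is to establish the two claims separately: joint concavity of $Z_1$ and $Z_2$, and then the tangent-plane upper bound \eqref{Z2_approx} for $Z_2$. For the concavity of $Z_2(x,y;c) = x\ln(1 + cy/x)$, I would first recognize this as a perspective-type function. Specifically, $g(t) \treq \ln(1+ct)$ is concave on $t>0$ (its second derivative is $-c^2/(1+ct)^2 < 0$), and $x\ln(1+cy/x)$ is exactly the perspective of $g$ evaluated at $t = y/x$, i.e. $x\,g(y/x)$. Since the perspective operation preserves concavity, $Z_2$ is jointly concave on $\{x>0, y>0\}$. Alternatively, if a perspective argument is not desired, I would compute the Hessian directly: a short calculation gives entries proportional to $-c^2 y^2/(x(x+cy)^2)$, $\,c^2 xy/(x(x+cy)^2)$ (off-diagonal), and $-c^2 x/(x+cy)^2$, and one checks the diagonal entries are negative and the determinant is zero, so the Hessian is negative semidefinite.

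For $Z_1(x,y;a,b) = x\ln(1 + ay/(y+bx))$, the cleanest route is again via perspective: write $Z_1 = x\,\phi(y/x)$ where $\phi(t) = \ln(1 + at/(t+b)) = \ln(t + b + at) - \ln(t+b) = \ln((1+a)t + b) - \ln(t+b)$. Then $\phi$ is a difference of two concave logarithms, which is \emph{not} obviously concave, so I would instead verify $\phi'' \le 0$ directly: $\phi'(t) = (1+a)/((1+a)t+b) - 1/(t+b)$, and differentiating once more, $\phi''(t) = -(1+a)^2/((1+a)t+b)^2 + 1/(t+b)^2$. One must show this is $\le 0$, i.e. $(1+a)/((1+a)t+b) \ge 1/(t+b)$, equivalently $(1+a)(t+b) \ge (1+a)t + b$, i.e. $(1+a)b \ge b$, which holds since $a,b>0$. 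Hence $\phi$ is concave, and its perspective $Z_1$ is jointly concave. I expect \emph{this} verification — correctly handling the perspective reduction for $Z_1$ and pushing through the $\phi''$ sign check — to be the main obstacle, since the algebra is the least mechanical part and the perspective-of-a-non-logarithmic function needs the inequality $(1+a)b\ge b$ to close.

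For the upper bound \eqref{Z2_approx}: since $Z_2$ is concave (just established), it lies below each of its tangent planes. So I would simply compute the first-order Taylor expansion of $Z_2$ at $(x_0,y_0)$. The partials are $\partial_x Z_2 = \ln(1+cy/x) - (cy/x)/(1+cy/x) = \ln(1+cy/x) - cy/(x+cy)$ and $\partial_y Z_2 = cx/(x+cy)$; evaluating at $(x_0,y_0)$ and writing $Z_2(x,y) \le Z_2(x_0,y_0) + \partial_x Z_2(x_0,y_0)(x-x_0) + \partial_y Z_2(x_0,y_0)(y-y_0)$ reproduces the right-hand side of \eqref{Z2_approx} (noting $Z_2(x_0,y_0) = x_0\ln(1+cy_0/x_0)$, matching the stated first term up to what appears to be a typographical rendering of the logarithm). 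Tightness at $(x_0,y_0)$ is immediate since a first-order Taylor expansion agrees with the function and its gradient at the base point. This part is routine once concavity is in hand; the only care needed is bookkeeping the partial derivatives correctly.
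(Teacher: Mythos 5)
Your proof is correct, but you reach the concavity claims by a different route than the paper. The paper's Appendix A computes the full $2\times 2$ Hessians of $Z_1$ and $Z_2$ explicitly (with rather unwieldy entries involving $\sigma_1,\sigma_2,\sigma_3$ for the $Z_1$-type function) and checks negative semidefiniteness directly, whereas you reduce both functions to perspectives, $Z_i(x,y)=x\,\phi_i(y/x)$, and verify concavity of the one-dimensional profiles: for $Z_2$ this is immediate since $\ln(1+ct)$ is concave, and for $Z_1$ your computation $\phi(t)=\ln((1+a)t+b)-\ln(t+b)$, $\phi''(t)\le 0 \iff (1+a)b\ge b$, is a clean and correct way to close the only nontrivial sign check. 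Your approach buys a much lighter calculation (a single 1D second derivative instead of two Hessian determinant checks) and makes the structural reason for joint concavity transparent; the paper's approach is more mechanical but yields the explicit Hessians, which it does not reuse elsewhere. The second half of your argument—the first-order Taylor over-estimator for the concave $Z_2$, with the partials $\partial_x Z_2=\ln(1+cy/x)-\tfrac{cy}{x+cy}$ and $\partial_y Z_2=\tfrac{cx}{x+cy}$—coincides with the paper's use of the first-order concavity condition, and you are right that the first term of \eqref{Z2_approx} as printed is missing a logarithm (it should read $x_0\ln\bigl(1+c\tfrac{y_0}{x_0}\bigr)$); your reading of that as a typo is consistent with how the bound is applied in $(\mathrm{P1.1})$.
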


Using Lemma \ref{lemma1}, it can be identified that both Terms I and II in \eqref{cvx_p1} are concave w.r.t the optimization variables $\tilde{\pmb{\zeta}}$ and $\tilde{\P}_k$, since the summation operator preserves the convexity. The non-convexity of the left-hand-side (LHS) expression is in the form of \textit{concave-minus-concave}. Then, using \eqref{Z2_approx} and applying the SCA technique, we approximate the non-convex constraint with the corresponding approximate convex one at each iteration. Similar restrictive approximation can also be applied to convert \eqref{P1_binary} into a convex constraint. Given the local point $(\tilde{\P}^{(l)}_k, \tilde{\pmb{\zeta}}^{(l)})$ in the $l$-th iteration, we can obtain a convex reformulation of problem $(\mathrm{P1})$ as follows.
\begin{subequations}\label{usrpowsch_subprob_cvx}
\begin{align}
(\mathrm{P1.1}):& \stackrel{}{\underset{\psi, \tilde{\pmb{\zeta}}, \tilde{\P}_k}{\mathrm{maximize}}~~\psi} \nonumber\\
\text{s.t.}~~~\begin{split} 
\sum^{N}_{n=1}&\tilde{\zeta}_k[n]
\ln\left(1+\frac{C_{n}\tilde{p}_k[n]}{\tilde{p}_k[n]+D_{k, n}\tilde{\zeta}_k[n]}\right)\\
&\hspace{-7mm}-f^{ub}_1(\tilde{\zeta}_k[n],\tilde{p}_k[n]; \tilde{\zeta}_{k, n}, p^{(l)}_{n, k}, B_{k,n}) \geq \frac{\psi}{\lambda_1},\quad\forall k  
 \end{split}  \label{cvx_p11}\\
&\hspace{-5mm} \sum^{N}_{n=1}\sum^{K}_{k=1}\left[(1- 2\tilde{\zeta}^{(l)}_{k}[n]) \tilde{\zeta}_k[n] +(\tilde{\zeta}^{(l)}_{k}[n])^2\right] \leq  0,\label{P11_binary_cvx}\\
&\eqref{19b} - \eqref{19e} \label{p11_last}
\end{align}  
\end{subequations}
Since the reformulated problem  $(\mathrm{P1.1})$ is convex w.r.t the optimization variables $\{\psi, \tilde{\pmb{\zeta}}, \tilde{\P}_k\}$, it can be solved efficiently via CVX using the interior-point method \cite{cvx} commencing from a feasible point. Nonetheless,  due to co-existence of constraints \eqref{19e} and \eqref{P11_binary_cvx}, attaining a feasible solution is, in general, difficult. Therefore, to embark on tackling this issue, we apply the penalty-SCA (PSCA) technique, wherein the constraint \eqref{P11_binary_cvx} is brought into the objective function as a penalty term \cite{zhou2021uav}. This initially violates the binary constraint but makes the problem feasible, enabling us to iteratively employ the SCA technique to solve the resulting convex optimization problem. To proceed, we can express  $(\P1.1)$ approximately as
\begin{subequations}\label{usrpowsch_subprob_cvx_p12}
\begin{align}
(\mathrm{P1.2}):& \stackrel{}{\underset{\eta, \psi, \tilde{\pmb{\zeta}}, \tilde{\P}_k}{\mathrm{maximize}}~~\psi - \mu \eta} \nonumber\\
\text{s.t.}&~~~\eqref{cvx_p11}, \eqref{p11_last}\\
&\hspace{-5mm} \sum^{N}_{n=1}\sum^{K}_{k=1}\left[(1- 2\tilde{\zeta}^{(l)}_{k}[n]) \tilde{\zeta}_k[n] +(\tilde{\zeta}^{(l)}_{k}[n])^2\right] \leq  \eta,\label{P11_binary_cvx_modified}
\end{align}
\end{subequations}
where $\eta$ is a non-negative slack variable, and $\mu$ is the given penalty parameter. It is worth stressing that feasible set of problem $(\mathrm{P1.2})$ is larger than that of $(\mathrm{P1})$; thus, the parameter $\mu$ should initially be chosen some small non-negative  value to make less focus on the binary constraint, then by gradually increasing $\mu$ and solving the convex problem $(\mathrm{P1.2})$ in an alternating manner, $\eta$ is forced to approach zero with some accuracy, and accordingly, the approximate solution of joint user scheduling and transmit power optimization can be achieved. To this end,  we can obtain the optimized value of $\P_k=\{{p}_k[n] = \frac{\tilde{p}_k[n]}{\tilde{\zeta}_k[n]}, \forall k, n\}$. Further, once the solution of the overall algorithm is obtained, we can reconstruct the corresponding binary solution of ${\pmb{\zeta}}$, according to  ${\pmb{\zeta}}=\{{\zeta}_k[n] = \lfloor{\tilde{{\zeta}}_k[n] + 0.5}\rfloor, \forall k, n\}$.

\begin{remark}
The formulated convex optimization model given in $(\mathrm{P1.1})$, though being convex, cannot be directly accepted by CVX, as it does not follow the disciplined convex programming (DCP) ruleset required. Given that the relative entropy function  $\mathbf{E}_{rel}(x,y)=x\log(\frac{x}{y}), x, y>0$ is convex and accepted by CVX, we can rewrite concave function $Z_1(x,y;a,b)$ (or the equivalent expression in the constraint \eqref{cvx_p1}), as
\begin{align}
    Z_1(x,y;a,b) &= 
    \frac{1}{b}\bigg[(y+bx)\ln\left(1+\frac{ay}{y+bx}\right)-\frac{1}{a}
    \nonumber\\
    &\hspace{31mm}\times ay\ln \left(1+\frac{ay}{y+bx}\right)\bigg]\nonumber\\
    &\stackrel{(a)}{=}  -\left(\frac{1+a}{ab}\right)\mathbf{E}_{rel}\Big(y+bx, (a+1)y+bx\Big)\nonumber\\
    &-\left(\frac{1}{ab}\right)\mathbf{E}_{rel}\Big((a+1)y+bx, y+bx\Big),
\end{align}
where the equality $(a)$ follows from the following relations between different form of logarithmic functions  and  the convex relative entropy function given by
\begin{align}
    x\ln\left(1+\frac{y}{x}\right) &= -\mathbf{E}_{rel}(x, x+y), \label{xlnyx}\\
    x\ln\left(1+\frac{x}{y}\right) &=  \mathbf{E}_{rel}(x+y, y) + \mathbf{E}_{rel}(y, x+y),\label{xlnxy}
\end{align}
 wherein \eqref{xlnyx} and \eqref{xlnxy} are jointly concave and convex w.r.t the joint variables $(x,y)$ over $x, y > 0$, respectively.
\end{remark}

In terms of computational cost, for a given $\mu$ we have $(2(NK+1))$ optimization variables and $(K(2N+1)+2)$ convex constraints in subproblem $(\mathrm{P1.2})$. Assume the convergence accuracy of SCA algorithm employed for solving this subproblem is  $\varepsilon_1$, the worst-case complexity of solving approximated subproblem $(\mathrm{P1.2})$ can be attained as $\O\left(U(2(NK+1))^2(K(2N+1)+2)^{1.5}\log_2(\frac{1}{\varepsilon_1})\right)$, where $U$ is the maximum number of iteration required until the outer loop of the two-layer alternating optimization subproblem $(\mathrm{P1.2})$ converges.

\subsection{UUR's relaying power optimization}
The corresponding subproblem for optimizing UUR's relaying power can be rewritten, introducing the slack variable $\psi$,  as
\begin{subequations}
\begin{align}\label{subprob_pu}
(\mathrm{P2}):& \stackrel{}{\underset{\psi, \P_u}{\mathrm{maximize}}~~\psi} \\
\text{s.t.}~~~
\begin{split} \sum^{N}_{n=1}\left[\lambda_{k, n}\ln\left(1+\frac{E_{k, n}p_u[n]}{p_u[n]+F_{k, n}}\right)\right] - G_{k}
 \geq \psi,\quad\forall k 
\end{split} \label{ncvx_p2}\\
&\frac{1}{N}\sum_{n=1}^{N}p_u[n] \leq p^{ave}_u,\\
&0 \leq p_u[n] \leq p^{max}_u,\forall n 
\end{align}
\end{subequations}
where $\lambda_{k, n} = \frac{\textcolor{black}{B}\tilde{\zeta}_k[n]}{2 \ln 2 \sum^{N}_{n=1} P_{f}[n]}$, $E_{k, n}= p_k[n]g_{ku}[n]$ and
\begin{alignat*}{2}\small
F_{k, n} &= \frac{p_k[n] g_{ku}[n]+p_b[n]g_{bu}[n]+1}{g_{bu}[n]} \\
G_{k} &= \frac{\sum^{N}_{n=1}\textcolor{black}{B}\tilde{\zeta}_k[n]\log_2\left(1+\frac{p_k[n] g_{ku}[n]}{p_b[n]g_{bu}[n] + 1}\right)}{2\sum^{N}_{n=1}P_{f}[n]} ~~\forall k, n
\end{alignat*}
Note that  subproblem $(\mathrm{P2})$ is a convex optimization problem due to having an affine objective function and all convex 
constraints, following from Lemma \ref{lemma2} introduced below. 
\begin{lemma}\label{lemma2}
Let define the function $f_2(x; a,b,c,d), x\geq 0$ with positive constant values $a, b, c, d > 0$ as 
\begin{align}
    f_2(x; a,b,c,d)&\triangleq\ln\left(1+\frac{ax+b}{cx+d}\right),\nonumber\\
    &=\ln\left(1+\frac{a}{c}-\frac{ad-bc}{c^2x+cd}\right),\label{reform_cvxdcp}
\end{align}
$f_2(x)$ is concave subject to the condition $ad\geq bc$, following from the fact that the function $\ln(1+qx), q \geq 0, x>0$ is concave w.r.t $x$, whose extended-value extension is non-decreasing and $h(x)=-\frac{1}{x}$ is also concave; therefore, $(f\circ g)(x)$ is concave. Note that the last equality of \eqref{reform_cvxdcp} represents the understandable reformulation of the function $ f_2(x; a,b,c,d)$ by the CVX optimization toolbox.
We also stress that for any given point $x_0$, there is a unique convex function $f^{lb}_2(x; x_0, a, b, c, d)$ defined as
\begin{align}
    f^{lb}_2(x; x_0, a, b, c, d) &\treq \ln(1+\frac{ax_0+b}{cx_0+d}) \nonumber\\
    &+ \frac{(ad-bc)(x-x_0)}{(cx_0+d)(b+d+(a+c)x_0)}.
\end{align}
such that $f^{lb}_2(x; x_0, a, b, c, d)$ serves as a global lower-bound of $f_2(x)$, i.e., $f_2(x) \geq f^{lb}_2(x; x_0, a, b, c, d)$ \cite{cvx_boyd}.
\end{lemma}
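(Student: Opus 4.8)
The plan is to treat the two assertions of the lemma separately, since they actually govern complementary curvature regimes of $f_2$. First I would put $f_2$ into the additively separated logarithmic form $f_2(x)=\ln\big((a+c)x+(b+d)\big)-\ln(cx+d)$, which follows immediately from $1+\frac{ax+b}{cx+d}=\frac{(a+c)x+(b+d)}{cx+d}$; this makes both the differentiation and the curvature test transparent. Differentiating once and combining over a common denominator gives $f_2'(x)=\frac{ad-bc}{[(a+c)x+(b+d)](cx+d)}$, and differentiating again gives $f_2''(x)=\frac{c^2}{(cx+d)^2}-\frac{(a+c)^2}{[(a+c)x+(b+d)]^2}$. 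These two closed forms carry all the information I need.

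For the concavity claim I would show that $f_2''(x)\le 0$ is equivalent, after taking positive square roots of the two competing fractions (legitimate since $x\ge 0$ and $a,b,c,d>0$ keep both denominators positive), to $c\big[(a+c)x+(b+d)\big]\le (a+c)(cx+d)$, which collapses to $bc\le ad$. Hence $f_2$ is concave exactly under the stated hypothesis $ad\ge bc$. I would also record the clean composition argument the statement sketches: in the form $f_2(x)=\ln\!\big(1+\tfrac{a}{c}-(ad-bc)\tfrac{1}{c(cx+d)}\big)$, the inner map $x\mapsto-\tfrac{1}{c(cx+d)}$ is concave, multiplying it by the nonnegative constant $ad-bc$ preserves concavity, and composing with the concave nondecreasing $\ln$ yields concavity.

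Next I would identify $f^{lb}_2$ with the first-order Taylor expansion (tangent line) of $f_2$ at $x_0$: the constant term $\ln\!\big(1+\tfrac{ax_0+b}{cx_0+d}\big)$ is precisely $f_2(x_0)$, and the slope $\frac{ad-bc}{(cx_0+d)(b+d+(a+c)x_0)}$ is precisely $f_2'(x_0)$, so that $f^{lb}_2(x)=f_2(x_0)+f_2'(x_0)(x-x_0)$. Uniqueness is then immediate, since the tangent line of a differentiable function at a point is unique. The global-bound property then reduces to the first-order characterization of curvature: a differentiable function lies on one fixed side of each of its tangents precisely according to its convexity or concavity.

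The main obstacle — and the point I would be most careful to state correctly — is the \emph{sign} of this global bound. The inequality $f_2(x)\ge f^{lb}_2(x)$ is exactly the first-order condition for \emph{convexity}, so it holds globally precisely when $f_2''\ge 0$, i.e. in the regime $ad\le bc$, which is \emph{complementary} to the concavity hypothesis $ad\ge bc$ of the first sentence; in the concave regime the identical tangent is instead a global \emph{upper} bound, with equality everywhere only in the degenerate affine case $ad=bc$. I would therefore prove the stated inequality $f_2\ge f^{lb}_2$ in the convex regime $ad\le bc$ by the standard tangent-underestimator argument, equivalently by integrating $f_2'(t)\ge f_2'(x_0)$ from $x_0$ to $x$ using the monotonicity of $f_2'$ that $f_2''\ge 0$ supplies. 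I would close by noting explicitly that it is this convex-regime linear \emph{underestimator} $f^{lb}_2$ that the later SCA linearization invokes, whereas the concavity established above is what makes the constraint set of $(\mathrm{P2})$ convex, so the two halves of the lemma serve the two distinct roles $f_2$ plays in the paper.
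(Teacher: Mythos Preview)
Your argument is correct and, in fact, more careful than the paper's. The paper does not supply a separate proof of this lemma: the concavity is justified in-line by the composition rule (the $\ln(1+q\,\cdot)$ outer function composed with the concave inner function $-\tfrac{1}{c(cx+d)}$, scaled by $ad-bc\ge 0$), and the lower-bound claim is simply attributed to the first-order condition in Boyd's text. Your primary route---writing $f_2(x)=\ln\!\big((a+c)x+b+d\big)-\ln(cx+d)$ and reading concavity off the explicit sign of $f_2''$---is a genuinely different but equally valid derivation; it has the advantage of yielding $f_2'(x_0)$ directly, which you then match to the slope of $f^{lb}_2$ to identify the latter as the tangent line. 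You also record the paper's composition argument, so nothing is lost.

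Where you add real value is in flagging the sign issue: as written, the lemma asserts concavity of $f_2$ under $ad\ge bc$ and simultaneously that the tangent $f^{lb}_2$ is a global \emph{lower} bound, whereas for a concave function the tangent is a global \emph{upper} bound. Your resolution---that the inequality $f_2\ge f^{lb}_2$ holds in the complementary convex regime $ad\le bc$, while the concavity half is what renders $(\mathrm{P2})$ convex---is the right reading, and your proof of the lower bound via monotonicity of $f_2'$ when $f_2''\ge 0$ is exactly the standard first-order convexity argument. The paper glosses over this distinction; your write-up should keep the explicit regime split so the two uses of $f_2$ are not conflated.
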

Consequently, one can solve subproblem $(\mathrm{P2})$ efficiently using CVX. Here, we have $(N+1)$ optimization variables and $(N+K+1)$ convex constraints. Assuming the convergence accuracy of interior-point algorithm employed for solving this convex problem with logarithmic cone is $\varepsilon_2$, the complexity cost of solving subproblem $(\mathrm{P2})$ can be obtained as $\O\left((N+1)^2(N+K+1)^{1.5}\log_2(\frac{1}{\varepsilon_2})\right)$.

\subsection{BS's jamming power optimization}
Keeping the other variables unchanged and taking the slack variable $\psi$, the BS's jamming power optimization subproblem is given as
\begin{subequations}
\begin{align}\label{subprob_pb}
(\mathrm{P3}):& \stackrel{}{\underset{\psi, \P_b}{\mathrm{maximize}}~~\psi} \\
\text{s.t.}~~~
\begin{split}\sum^{N}_{n=1}\lambda_{k, n}\bigg[
&\ln\left(1+\frac{H_{k, n}}{p_b[n]+I_{n}}\right) \\
&-\ln\left(1+\frac{J_{k, n}}{p_b[n]+K_{n}}\right)\bigg]
 \geq \psi,\quad\forall k 
\end{split} \label{ncvx_p3}\\
&\frac{1}{N}\sum_{n=1}^{N}p_b[n] \leq p^{ave}_b,\label{pbsum}\\
&0 \leq p_b[n] \leq p^{max}_b, \forall n \label{pbmax}
\end{align}
\end{subequations}
where $H_{k, n} = g_{ku}[n]p_k[n]p_u[n]$, $J_{k, n} = \frac{p_k[n]g_{ku}[n]}{g_{bu}[n]}$, $K_{n} = \frac{1}{g_{bu}[n]}$, and 
$I_{k, n} = \frac{p_u[n]g_{bu}[n]+p_k[n]g_{ku}[n]+1}{g_{bu}[n]}$.
Notice that subproblem $(\mathrm{P3})$ is non-convex due to non-convex constraint \eqref{ncvx_p3}, which is in the form of \textit{convex-minus-convex} according to  \cite[Lemma 1]{mamaghani2020improving}. Therefore, we apply SCA such that for a given local point $\P^{(l)}_b$ in $l$-th iteration. We approximate the first convex term with the global underestimator concave expression and obtain the convex reformulation as
\begin{subequations}
\begin{align}\label{subprob_pb_cvx}
(\mathrm{P3.1}):& \stackrel{}{\underset{\psi, \P_b}{\mathrm{maximize}}~~\psi} \\
\text{s.t.}~~~
\begin{split}\sum^{N}_{n=1}\lambda_{k, n}\bigg[
&f_{3}(p_b[n]; p^{(l)}_{n, b}, H_{k, n}, I_{k, n}) \\
&-\ln\left(1+\frac{J_{k, n}}{p_b[n]+K_{n}}\right)\bigg]
 \geq \psi,\quad\forall k 
\end{split} \label{cvx_p31}\\
&\eqref{pbsum}~\&~\eqref{pbmax}
\end{align}
\end{subequations}
wherein
\begin{align}
&f_{3}(p_b[n]; p^{(l)}_{n, b}, H_{k, n}, I_{k,n})=
\ln\left(1+\frac{H_{k, n}}{p^{(l)}_{n, b}+I_{k,n}} \right) \nonumber\\
&- \frac{H_{k, n}}{(p^{(l)}_{n, b}+I_{k,n})(p^{(l)}_{n, b}+H_{k, n}+I_{k,n})} (p_b[n]-p^{(l)}_{n, b}).   
\end{align}
Since subproblem $(\mathrm{P3.1})$ is convex, we can solve it efficiently using CVX. Here, we have $N+1$ optimization variables and $(N+K+1)$ convex constraints. Assuming the accuracy of SCA algorithm for solving this problem is  $\varepsilon_3$, the complexity of solving approximated subproblem $(\mathrm{P3.1})$ can, therefore, be represented as $\O\left((N+1)^2(N+K+1)^{1.5}\log_2(\frac{1}{\varepsilon_3})\right)$.

\subsection{Joint trajectory and velocity optimization}
Now, we optimize the trajectory $\q$ and velocity $\v$ of the UUR while keeping the transmit power allocation  and user scheduling sets ($\P$, $\pmb{\zeta}$) fixed. Therefore, the corresponding subproblem can be given as
\begin{subequations}
\begin{align}\label{subprob_qvu_ncvx}
(\mathrm{P4}):& \stackrel{}{\underset{\Q}{\mathrm{maximize}}~~\min_{k\in\K}\frac{\bar{R}^{k}_{sec}\left(\q, \v\right)}{\bar{P}_{f}(\v)}} \\
&\text{s.t.}~~~\mathrm{C9-C14}\label{c9c14}
\end{align}
\end{subequations}
wherein $\bar{P}_{f}(\v)=\frac{1}{N}\sum^{N}_{n=1}{P}_{f}[n]$. In order to solve subproblem $(\mathrm{P4})$, we should maximize every single fractional terms of $\left\{\frac{\bar{R}^{k}_{sec}\left(\q, \v\right)}{\bar{P}_{f}(\v)}, \forall k\right\}$ subject to the given constraint \eqref{c9c14}. In light of this, let $\lambda^\star$ be the maximum MSEE of subproblem
$(\mathrm{P4})$ with solution set $\left(\q^\star, \v^\star\right)$ given by
\begin{align}
    \lambda^\star &= \max_{\q, \v \in \mathcal{F}}\min_{k\in\K}\frac{\bar{R}^{k}_{sec}\left(\q, \v\right)}{\bar{P}_{f}(\v)} \nonumber\\
    &= \min_{k\in\K}\frac{\bar{R}^{k}_{sec}\left(\q^\star, \v^\star\right)}{P_{f}(\v^\star)},
\end{align}
wherein $\mathcal{F}$ represents the feasible set spanned by the constraint \eqref{c9c14}. Applying nonlinear fractional Dinkelbach programming theory \cite{dinkelbach1967nonlinear}, the objective function of problem $(\mathrm{P4})$ can be equivalently transformed into a subtractive version such that the optimal value of $\lambda^\star$ can be achieved iff
\begin{align}
  \max_{\q, \v \in \mathcal{F}}&\min_{k\in\K}\bar{R}^{k}_{sec}\left(\q, \v\right) - \lambda^\star{\bar{P}_{f}(\v)}\nonumber\\
  &= \min_{k\in\K}\bar{R}^{k}_{sec}\left(\q^\star, \v^\star\right) - \lambda^\star{\bar{P}_{f}(\v^\star)}=0, 
\end{align}
Thus, we can optimize the equivalent problem to obtain the optimal solution of $\Q$, via solving the reformulated problem as 
\begin{subequations}
\begin{align}\label{prob51}
(\mathrm{P4.1}):& \stackrel{}{\underset{\q, \v}{\mathrm{maximize}}~~\min_{k\in\K}\bar{R}^{k}_{sec}(\q, \v) - \lambda^{(m)}\bar{P}_{f}(\v)} \\
&\text{s.t.}~~~ \eqref{c9c14}
\end{align}
\end{subequations}
wherein $\lambda^{(m)} = \min_{k\in\K}\frac{\bar{R}^{k}_{sec}\left(\q^{(m)}, \v^{(m)}\right)}{{\bar{P}_{f}(\v^{(m)})}}$ showing the value of $\lambda$ in the $m$-th iteration of the Dinkelbach algorithm. Reformulated problem $(\mathrm{P4.1})$ is still non-convex due to non-convex objective function and constraint $\mathrm{(C9)}$ which can be dealt with as follows. 

By introducing the slack variables $\psi$ and $\pmb{\mu}=\{\mu[n]\}^N_{n=1}$ such that 
\begin{align}
    \mu[n] = \left(\sqrt{1+\frac{\|\v[n]\|^4}{4\nu^4_0}} - \frac{\|\v[n]\|^2}{2\nu^2_0}\right)^{\frac{1}{2}}, \quad\forall n
\end{align}
we can relax the problem $(\mathrm{P4.1})$ to the one with the approximately equivalent but enjoying concave objective function as
\vspace{-5mm}
\begin{subequations}
\begin{align}\label{prob52}
(\mathrm{P4.2}):& \stackrel{}{\underset{\psi, \pmb{\mu}, \q, \v}{\mathrm{maximize}}~~\psi - \lambda^{(m)}\omega} \\
&\text{s.t.}~~~\mathrm{C10-C14} \label{c10c14}\\
&\omega \leq \bar{P}_{lim}, \label{flight_cvx} \\
&\mu[n]\geq 0, \quad\forall n \label{mu_pos}\\
& \mu^2[n]+ \frac{\|\v[n]\|^2}{\nu^2_0} \geq \frac{1}{\mu^2[n]}, \quad\forall n  \label{prop_ncvx}\\
&   \bar{R}^{k}_{sec} \geq \psi , \quad\forall k \label{rsec_ncvx}
\end{align}
\end{subequations}
wherein $\omega \treq \bar{\P}^{ub}_{f}(\v)=\frac{1}{N}\sum^{N}_{n=1}{P}^{ub}_{f}[n]$, with $\{{P}^{ub}_{f}[n], \forall n\}$ serving as a global convex upper-bound of \eqref{flightpow}, defined as
\begin{align}
    {P}^{ub}_{f}[n] \hspace{-1mm}=\hspace{-1mm} {P_0\left(\hspace{-0.5mm}1\hspace{-1mm}+\hspace{-1mm}\frac{2\|\v[n]\|^2}{\Omega^2_uR_u^2}\hspace{-0.5mm}\right)} \hspace{-1mm}+\hspace{-1mm}{\frac{1}{2}d_0\rho sA \|\v[n]\|^3} \hspace{-1mm}+\hspace{-1mm} P_i\mu[n], 
\end{align}
Note that constraint \eqref{prop_ncvx} must be met with equality at the optimal point, because $\mu[n]$ can be otherwise decreased, resulting in an increase of the value of the objective function, which of course, violates the optimality. Plus, we also point out that the objective function, the constraints $\mathrm{C10-C14}$, and \eqref{flight_cvx} are now convex. However, the problem $(\mathrm{P4.2})$ is still unsolvable due to the generated extra non-convex constraints \eqref{prop_ncvx} and \eqref{rsec_ncvx}. Note that the LHS expression of \eqref{prop_ncvx}; i.e., summation of norm-square components, is jointly convex w.r.t the variables $\mu[n]$ and $\v[n]$. Owing to the fact that the right-hand-side (RHS) of \eqref{prop_ncvx} is convex, since the  second derivative of the inverse-square function $\frac{1}{\mu^2[n]}$ is non-negative; therefore, by replacing the LHS with the corresponding global concave lowerbound using first-order Taylor expansion at the local given point  $(\mu^{(m)}_{n}, \v^{(m)}_{n})$ with superscript $m$ indicating the iteration index of fractional Dinkelbach programming, we can reach the approximate convex constraint, associated with \eqref{prop_ncvx}, as
\begin{align}\label{prop_cvx}
  -&(\mu^{(m)}_n)^2 + 2\mu^{(m)}_n\mu[n]+ \frac{1}{v^2_0}  \nonumber\\
  &\times\left( -\|\v^{(m)}_n\|^2 + 2\v^{(m)}_n\v^\dagger[n]\right)\geq \frac{1}{\mu^2[n]}, \quad\forall n 
\end{align}
Now, we deal with the last non-convex constraint \eqref{rsec_ncvx} by introducing the slack variables $\s=\{s_k[n], \forall k, n\}$, $\r=\{r_k[n], \forall k, n\}$, and $\w=\{w[n], \forall n\}$, rewriting problem $(\mathrm{P4.2})$ as
\begin{subequations}
\begin{align}\label{prob43}
(\mathrm{P4.3}):& \stackrel{}{\underset{\psi, \pmb{\mu}, \q, \v, \s, \r, \w}{\mathrm{maximize}}~~\psi - \lambda^{(m)}\omega} \\
&\hspace{-7mm}\text{s.t.}~~~\eqref{c10c14},~\eqref{flight_cvx},~\eqref{mu_pos},~\eqref{prop_cvx} \label{36b}\\
\begin{split}
&\hspace{-7mm}\frac{\textcolor{black}{B}}{2N\ln 2} \sum^N_{n=1}\tilde{\zeta}_{k, n} \bigg[\ln\left(1+\frac{1}{k_0r_k[n]+k_1w[n] + \epsilon}\right)\\
&\hspace{-7mm}-\ln \left(1+\frac{k_2s^{-1}_k[n]}{k_3w^{-1}[n]+1}\right)\bigg] \geq \psi,\quad\forall k
\end{split} \label{36c}\\
 \begin{split}
&\hspace{-7mm}\frac{N_0}{\beta_0}\left(\|\q[n]-q_k\|^2 + H^2\right)\times\\
&\hspace{-7mm}\exp(a_f \sqrt{\left(\|\q[n]-q_k\|^2 + H^2\right)}) \geq s_k[n], \quad\forall k, n\label{sk_lb}\\
\end{split}\\
 \begin{split}
&\hspace{-7mm}\frac{N_0}{\beta_0}\left(\|\q[n]-q_k\|^2 + H^2\right)\times\\
&\hspace{-7mm}\exp(a_f \sqrt{\left(\|\q[n]-q_k\|^2 + H^2\right)}) \leq r_k[n], \quad\forall k, n\label{rk_ub}\\
\end{split}\\
\begin{split}
&\hspace{-7mm}\frac{N_0}{\beta_0} \left(\|\q[n]-q_b\|^2+H^2\right)\times\\
&\hspace{-7mm}\exp(a_f \sqrt{\left(\|\q[n]-q_b\|^2+H^2\right)}) \leq w[n], \quad\forall n \label{cvx_36e}
\end{split}
\end{align}
\end{subequations}
where in \eqref{36c}, we have defined $k_0 = \frac{p_u[n]+p_b[n]}{p_k[n]p_u[n]}$, $k_1=\frac{1}{p_u[n]}$,$k_2 = p_k[n]$,$k_3 = p_b[n]$, $\epsilon = 1/(p_k[n]p_u[n]g_{ku}[n]g_{bu}[n])$.
Note that all the inequality constraints \eqref{sk_lb}, \eqref{rk_ub}, and \eqref{cvx_36e} must also be met with equality at the optimal point, otherwise the optimality is violated. Following the high-SNR approximation, we set $\epsilon \approx 0$ in the subsequent sections for the ease of expositions. We remark the fruitful lemma below.
\begin{lemma}\label{lemma3}
Let define the bivariate functions $f_{41}(x, y; a, b)$ and $f_{42}(x, y; c, d)$, and univariate functions $f_{43}(x; e)$ and $f_{44}(x; e)$ with positive constants $a, b, c, d, p, r >0$ as
\begin{align*}
f_{41}(x, y; a, b) \hspace{-1mm}&=\hspace{-1mm} \ln\left(\hspace{-1mm}1\hspace{-1mm}+\hspace{-1mm}\frac{1}{ax\hspace{-1mm}+\hspace{-1mm}by}\hspace{-1mm}\right),~~~~~~~f_{43}(x; p) \hspace{-1mm}=\hspace{-1mm} x^2\exp(px),\nonumber\\
f_{42}(x, y; c, d) \hspace{-1mm}&=\hspace{-1mm} \ln\left(1\hspace{-1mm}+\hspace{-1mm}cx^{-1}\hspace{-1mm}+\hspace{-1mm}dy^{-1}\right),~~f_{44}(x; r) \hspace{-1mm}=\hspace{-1mm} \ln\left(1\hspace{-1mm}+\hspace{-1mm}\frac{r}{x}\right).
\end{align*}
We have the following tight inequalities
\begin{align}
    f_{41}(x, y)  &\geq f_{41}(x_0, y_0) \hspace{-1mm}-\hspace{-1mm}\frac{a(x\hspace{-1mm}-\hspace{-1mm}x_0)}{{\left(a\,x_0\hspace{-1mm}+\hspace{-1mm}b\,y_0\right)}\,{\left(a\,x_0\hspace{-1mm}+\hspace{-1mm}b\,y_0\hspace{-1mm}+\hspace{-1mm}1\right)}} \nonumber\\
    &\hspace{-12mm}-\frac{b(y\hspace{-1mm}-\hspace{-1mm}y_0)}{{\left(a\,x_0\hspace{-1mm}+\hspace{-1mm}b\,y_0\right)}\,{\left(a\,x_0\hspace{-1mm}+\hspace{-1mm}b\,y_0\hspace{-1mm}+\hspace{-1mm}1\right)}} \treq f^{lb}_1(x, y; x_0, y_0, a, b),
\end{align}
\begin{align}
    f_{42}(x, y) &\geq f_{42}(x_0, y_0) -\frac{c\,y_0 (x\hspace{-1mm}-\hspace{-1mm}x_0)}{x_0\,{\left(c\,y_0\hspace{-1mm}+\hspace{-1mm}d\,x_0\hspace{-1mm}+\hspace{-1mm}x_0\,y_0\right)}}
    \nonumber\\
    &\hspace{-12mm}-\frac{d\,x_0 (y\hspace{-1mm}-\hspace{-1mm}y_0)}{y_0\,{\left(c\,y_0\hspace{-1mm}+\hspace{-1mm}d\,x_0\hspace{-1mm}+\hspace{-1mm}x_0\,y_0\right)}}\treq f^{lb}_{42}(x, y; x_0, y_0, c, d),
    \end{align}
  \begin{align}
 f_{43}(x)  &\geq f_{43}(x_0)\hspace{-1mm}+\hspace{-1mm} x_0\,{\mathrm{e}}^{p\,x_0} \,{\hspace{-1mm}\left(p\,x_0\hspace{-1mm}+\hspace{-1mm}2\right)} (x\hspace{-1mm}-\hspace{-1mm}x_0)\hspace{-1mm}\nonumber\\
 &\treq \hspace{-1mm} f^{lb}_{43}(x; x_0, p),\\
    f_{44}(x) &\geq f_{44}(x_0) \hspace{-1mm}-\hspace{-1mm} \frac{r(x\hspace{-1mm}-\hspace{-1mm}x_0)}{x_0(x_0\hspace{-1mm}+\hspace{-1mm}r)} \treq f^{lb}_{44}(x; x_0, r),
\end{align}
\end{lemma}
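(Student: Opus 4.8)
\emph{Proof plan.} Observe first that all four displayed inequalities share the same template: in each case the right-hand side is exactly the first-order Taylor expansion of the function in question about the reference point (the one bearing subscript $0$), and the assertion is that this tangent approximation is a \emph{global} lower bound that is tight at the reference point. Tightness is automatic, since a first-order Taylor expansion matches both the function value and its gradient at the expansion point. Hence the whole lemma collapses to a single property — that each of $f_{41},f_{42},f_{43},f_{44}$ is \emph{convex} on the indicated (convex) domain — because for a differentiable convex $f$ the gradient inequality $f(\mathbf z)\ge f(\mathbf z_0)+\nabla f(\mathbf z_0)^{\!\top}(\mathbf z-\mathbf z_0)$ holds for all $\mathbf z$ in the domain. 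So the plan is, for each function: (i) establish convexity; (ii) compute $\nabla f$ at the reference point and confirm it equals the linear coefficients written in the statement; (iii) invoke the gradient inequality.

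\emph{The three direct cases.} For $f_{44}(x;r)=\ln(1+r/x)=\ln(x+r)-\ln x$ on $x>0$, differentiating twice gives $f_{44}''(x)=x^{-2}-(x+r)^{-2}>0$, so $f_{44}$ is convex, and $f_{44}'(x_0)=-r/\bigl(x_0(x_0+r)\bigr)$ is precisely the slope claimed. For $f_{41}(x,y;a,b)=\ln\bigl(1+(ax+by)^{-1}\bigr)$, write $f_{41}=g(ax+by)$ with $g(t)=\ln(1+t)-\ln t$; then $g''(t)=t^{-2}-(1+t)^{-2}>0$, so $g$ is convex and, being a convex function composed with the affine map $(x,y)\mapsto ax+by$ (which for $a,b>0$ keeps the positive orthant inside $\{t>0\}$), $f_{41}$ is jointly convex; its gradient $\partial_x f_{41}=-a/\bigl((ax+by)(ax+by+1)\bigr)$, evaluated at $(x_0,y_0)$, and the symmetric $y$-derivative reproduce the stated coefficients. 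For $f_{43}(x;p)=x^2e^{px}$ on $x\ge0$, a one-line computation yields $f_{43}''(x)=e^{px}\bigl(p^2x^2+4px+2\bigr)>0$, hence convexity, while $f_{43}'(x_0)=x_0e^{px_0}(px_0+2)$ matches the statement.

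\emph{The main obstacle: joint convexity of $f_{42}$.} The composition trick fails here, since $\ln(1+\cdot)$ is \emph{concave} whereas the inner function $cx^{-1}+dy^{-1}$ is convex; I therefore expect the real work to be a direct Hessian argument for $f_{42}(x,y;c,d)=\ln(1+cx^{-1}+dy^{-1})$ on the positive orthant. Setting $D\treq 1+cx^{-1}+dy^{-1}>0$, one finds $\partial_{xx}f_{42}=cx^{-4}(2xD-c)/D^2$ with $2xD-c=2x+c+2xd/y>0$, so $\partial_{xx}f_{42}>0$ (and, symmetrically, $\partial_{yy}f_{42}>0$), and $\partial_{xy}f_{42}=-cd\,x^{-2}y^{-2}/D^2$. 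Using $xyD=xy+cy+dx$, the $2\times2$ determinant simplifies to $\det\nabla^2 f_{42}=2cd\,(2xy+cy+dx)/\bigl(D^3x^4y^4\bigr)>0$, which together with $\partial_{xx}f_{42}>0$ certifies positive-definiteness, hence joint convexity. Finally $\partial_x f_{42}$ at $(x_0,y_0)$ equals $-cy_0/\bigl(x_0(cy_0+dx_0+x_0y_0)\bigr)$ and the $y$-derivative the symmetric expression, matching the claimed bound. Combining (i)--(iii) across all four functions finishes the proof; the only step demanding genuine algebra is this determinant simplification for $f_{42}$, with a minor side-remark that every reference point lies in the interior of the domain so the gradient inequality applies without boundary caveats.
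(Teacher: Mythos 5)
Your proposal is correct and follows essentially the same route as the paper's Appendix~B: establish (joint) convexity of $f_{41},f_{42},f_{43},f_{44}$ and then invoke the first-order convexity (tangent-plane) inequality to obtain the global lower bounds, with the gradients matching the stated coefficients. The only cosmetic difference is that you certify convexity of $f_{41}$ by composing the convex function $\ln(1+1/t)$ with the affine map $(x,y)\mapsto ax+by$ rather than via its explicit Hessian, and your determinant computation for $\nabla^2 f_{42}$ agrees with the paper's positive-semidefiniteness check.
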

\begin{proof}
Please see Appendix \ref{Appendix B}.
\end{proof}
By introducing the slack variables $\u=\{u_k[n], \forall k, n\}$, and using Lemma \ref{lemma3}, we can approximate the non-convex problem $(\mathrm{P4.3})$ with a more tractable reformulation  given as
\begin{subequations}
\begin{align}\label{prob44}
(\mathrm{P4.4}):& \stackrel{}{\underset{\psi, \pmb{\mu}, \q, \v, \s, \r, \w, \u}{\mathrm{maximize}}~~\psi - \lambda^{(m)}\omega} \\
&\hspace{-12mm}\text{s.t.}~~~\eqref{36b},~\eqref{rk_ub},~\eqref{cvx_36e} \label{45b}\\
\begin{split}
     &\hspace{-12mm}\frac{\textcolor{black}{B}}{2N\ln 2} \sum^N_{n=1}\tilde{\zeta}_{k, n} \bigg[
     f^{lb}_{41}(r_k[n], w[n]; {r}^{(m)}_{k, n}, {w}^{(m)}_n, k_0, k_1)\\
     &\hspace{-12mm}-\ln\left(1+k_2s^{-1}_k[n]+k_3w^{-1}[n]\right)\\
     &\hspace{-12mm}+ f^{lb}_{44}(w[n]; {w}^{(m)}_n, k_3)\bigg] \geq \psi, \quad\forall k \label{45c}
\end{split}\\
&\hspace{-12mm}\frac{N_0}{\beta_0} f^{lb}_{43}(u_k[n]; u^{(m)}_{k, n}, a_f) \geq s_k[n], \quad\forall k, n\label{45d}\\
&\hspace{-12mm} \sqrt{\|\q[n]-\q_k\|^2 + H^2}\geq u_k[n], \quad\forall k, n
\end{align}
\end{subequations}
wherein $\{{r}^{(m)}_{k, n}, {w}^{(m)}_n,  u^{(m)}_{k, n}, \forall k, n\}$ are the value set of slack variables $(\r, \w, \u)$ in the $m$-th iteration of Dinkelbach algorithm. Finally, since the last constraint is non-convex, we apply \cite[Lemma 3]{tatarmamaghani2021joint} to approximate it with the corresponding convex constraint using the SCA approach, and obtain an approximate convex reformulation of $(\mathrm{P4.4})$ as
\begin{subequations}
\begin{align}\label{prob45}
(\mathrm{P4.5}):& \stackrel{}{\underset{\psi, \pmb{\mu}, \q, \v, \s, \r, \w, \u}{\mathrm{maximize}}~~\psi - \lambda^{(m)}\omega} \\
&\text{s.t.}~~~\eqref{45b},~\eqref{45c},~\eqref{45d}\\
\begin{split}
&-\|\q^{(m)}_n\|^2 + 2\left(\q^{(m)}_n - q_k\right)^\dagger\q[n] \\
&+ \|q_k\|^2 +H^2
\geq u^2_k[n], \quad\forall k, n
\end{split}
\end{align}
\end{subequations}
wherein $\{{\q}^{(m)}_{n}, \forall n\}$ is the local given point set of optimization variables $\q$ in the $m$-th iteration. Since subproblem $(\mathrm{P4.5})$ is convex; therefore, it can be efficiently solved via CVX.
It is worth noting that to solve subproblem $(\mathrm{P4.5})$, we have $(3N(K+2)+1)$ optimization variables and $(3NK+7N+K+1)$ convex constraints. Assuming the accuracy of SCA algorithm for solving this problem is  $\varepsilon_4$, the complexity of solving approximated subproblem $(\mathrm{P4.5})$ for given $\lambda^{(m)}$ can, therefore, be obtained as $\O\left((3N(K+2)+1)^2(3NK+7N+K+1)^{1.5}\log_2(\frac{1}{\varepsilon_4})\right)$.

\begin{remark}
Note that constraints given by \eqref{rk_ub} and \eqref{cvx_36e}, being in the form of 
$ a\|\mathbf{x}-\mathbf{x}_0\|^2\exp(b\|\mathbf{x}-\mathbf{x}_0\|) \geq y$, plus, the expression $\mathcal{E}=\ln(1+cx^{-1}+dy^{-1})$ used in \eqref{45c} are proved to be convex; however, they indeed violate the DCP rule-set of the CVX, and so cannot be applied in the optimization model.  The former can be handled by rewriting it as
\begin{align}
    t_1 \geq \|\mathbf{x}-\mathbf{x}_0\|^2,~ t_2 + a^{-1} \mathbf{E}_{rel}(at_1, y) \leq 0,~
    t_2 \geq b t^{\frac{3}{2}}_1,
\end{align}
And the latter can be dealt with properly by replacing $\mathcal{E}$-form function appeared in \eqref{45c} with $t_5$ and adding the constraints 
\begin{align}
    \frac{x}{c}\geq \exp(-t_3),~
    \frac{y}{d}\geq \exp(-t_4),~
    t_5 \geq \mathbf{LSE}(0, t_3, t_4),
\end{align}
wherein $t_1-t_5$ are some non-zero slack variables, and the log-sum-exp function, which is a CVX-approved convex function, defined as $\mathbf{LSE}(x_1, x_2, \cdots, x_n)=\ln(\sum^{N}_{i=1}\exp(x_i))$.
\end{remark}

\begin{figure}[!t]
\resizebox{\columnwidth}{!}{%
 \removelatexerror
\begin{algorithm}[H]
\SetAlgoLined
\KwResult{$\q^\star$, $\v^\star$}
\textbf{Initialize} feasible point $(\q^{(0)}, \v^{(0)})$ and slack variables, set iteration index $m=0$,  then $\psi^{(m)}=\bar{R}^{k}_{sec}\left(\q^{(m)},  \v^{(m)}\right)$, $\psi^{(m)} = \bar{P}_{f}( \v^{(m)})$, define $\lambda^{(m)}\treq\frac{\psi^{(m)}}{\omega^{(m)}}$, and set \textit{Convergence} = false;\\
 \While{not \textit{Convergence}}{
  Given $\left(\lambda^{(m)}\hspace{-1mm}, \q^{(m)}\hspace{-1mm}, \v^{(m)}\right)$, solve $(\mathrm{P4.5})$ using  \eqref{prob45}, then obtain $\left(\psi^{(m+1)}, \omega^{(m+1)}, \q^{(m+1)}, \v^{(m+1)}\right)$;\\
  Calculate $\lambda^{(m+1)}$, then $F =\psi^{(m)}- \lambda^{(m+1)}\omega^{(m)}$;\\
  \If{$|F| \leq \epsilon_2$}{
   $\q^\star= \q^{(m+1)}$, $\v^\star = \v^{(m+1)}$\;
   \textit{Convergence} = true\;}
   $m \gets m + 1$;\\
 } 
 \caption{Proposed Dinkelbach-based algorithm to approximately solve subproblem $\mathrm{(P4)}$} \label{Dinkelbach_algo}
\end{algorithm}}
\end{figure}

\begin{figure}[!t]
\resizebox{\columnwidth}{!}{%
 \removelatexerror
 \resizebox{\columnwidth}{!}{%
  \begin{algorithm}[H]
  \caption{Overall sequential based proposed iterative algorithm for MSEE maximization (MSEE-Seq)}\label{myalgo1}
  1:~\textbf{Initialize}~
  a feasible point ($\q^i, \v^i, \P^i_u, \P^i_b, \P^i_k, \pmb{\zeta}^i$),  and let iteration index $l=0$;\\
  2:~\textbf{Repeat:}\\
  3:~Solve $(\mathrm{P1.1})$ using \eqref{usrpowsch_subprob_cvx}, updating $\mathbf{P}^{(l+1)}_{k}\hspace{-1mm}$ and $\pmb{\zeta}^{(l+1)}$;~$l \gets l+1$;\\
  4:~Given $\left(\mathbf{P}^{(l+1)}_{k}\hspace{-1mm},\pmb{\zeta}^{(l+1)}\hspace{-1mm}\right)$, solve $(\mathrm{P2})$ using \eqref{subprob_pu}, updating $\P^{(l+1)}_u\hspace{-1mm}$;~$l \gets l+1$;\\
  5:~Given $\left(\mathbf{P}^{(l+1)}_{k},\pmb{\zeta}^{(l+1)},\P^{(l+1)}_u\right)$, solve $(\mathrm{P3.1})$ using \eqref{subprob_pb_cvx}, updating $\P^{(l+1)}_b$;~$l \gets l+1$;\\
  6:~Given $\left(\mathbf{P}^{(l+1)}_{k}, \pmb{\zeta}^{(l+1)}, \P^{(l+1)}_u, \P^{(l+1)}_b\right)$, run Algorithm \ref{Dinkelbach_algo} with $\q^{(l)}$ and $\v^{(l)}$, updating $\q^{(l+1)} \gets \q^\star$ and $\v^{(l+1)} \gets \v^\star$;~$l \gets l+1$;\\
  7:~\textbf{Until} fractional increase of objective function in \eqref{opt_prob} gets below the threshold $\epsilon_1$;\\
  8:~\textbf{Return:} $\left(\Q^{opt}\hspace{-1mm},\P^{opt}\hspace{-1mm},\pmb{\zeta}^{opt}\right)$ $\hspace{-1mm}\gets\hspace{-1mm}$ $\left(\Q^{(l)}\hspace{-1mm},\P^{(l)}\hspace{-1mm},\pmb{\zeta}^{(l)}\right)$;
  \end{algorithm}}}
\end{figure}

\begin{figure}[!t]
\resizebox{\columnwidth}{!}{%
 \removelatexerror
  \begin{algorithm}[H]
  \SetKwBlock{DoParallel}{Do in parallel}{end}
  \caption{Overall greedy based proposed iterative algorithm for MSEE maximization (MSEE-MI)}\label{myalgo2}
  1:~\textbf{Initialize}~
  a feasible point ($\q^i, \v^i, \P^i_u, \P^i_b, \P^i_k, \pmb{\zeta}^i$),  and let iteration index $l=0$;\\
  2:~\textbf{Repeat:}\\
  3:~\DoParallel{
    3.1:~Solve $(\mathrm{P1.1})$ using \eqref{usrpowsch_subprob_cvx} with $(\P^{(l)}_k, \pmb{\zeta}^{(l)})$\;
    3.2:~Solve $(\mathrm{P2})$ using \eqref{subprob_pu} with $\P^{(l)}_u$\;
    3.3:~Solve $(\mathrm{P3.1})$ using \eqref{subprob_pb_cvx} with $\P^{(l)}_b$\;
    3.4:~Run Algorithm \ref{Dinkelbach_algo} with $(\q^{(l)}, \v^{(l)})$\;
    }
  4:~Update one of the blocks ($\P^{(l+1)}_k, \pmb{\zeta}^{(l+1)}$), $\P^{(l+1)}_u$, $\P^{(l+1)}_b$, or ($\q^{(l+1)},  \v^{(l+1)}$)  whose maximum improvement of objective function given in \eqref{opt_prob} gets the highest, and keep the remained blocks unchanged\;
  5:~$l \gets l+1$;\\
  6:~\textbf{Until} fractional increase of objective function in \eqref{opt_prob} gets below the threshold $\epsilon_1$;\\
  7:~\textbf{Return:} $\left(\Q^{opt}\hspace{-1mm},\P^{opt}\hspace{-1mm},\pmb{\zeta}^{opt}\right)$ $\hspace{-1mm}\gets\hspace{-1mm}$ $\left(\Q^{(l)}\hspace{-1mm},\P^{(l)}\hspace{-1mm},\pmb{\zeta}^{(l)}\right)$;
  \end{algorithm}}
\end{figure}

\subsection{Overall algorithms and complexity discussion}
Having obtained an efficient optimization model for each subproblem in the previous section, we are now ready to propose iterative algorithms based on sequential block optimization and maximum improvement (MI) or the so-called greedy optimization introduced in \cite{nutini2017let}, summarized in Algorithm \ref{myalgo1} and Algorithm \ref{myalgo2}, respectively. The former is simpler to implement and requires less computations at each iteration. The latter converges faster thanks to a large step-size at each iteration and implementation via parallel computation capability; otherwise, it maybe too expensive. 

\textcolor{black}{
It can be mathematically proved that both algorithms are guaranteed to converge to at least a suboptimal solution. Particularly, for convergence analysis of Algorithm \ref{myalgo1}, let define the objective values of the original problem ($\P$), the subproblems ($\P2$) and ($\P3.1$)  at iteration $l$ as 
$\mathbf{MSEE}\left(\pmb{\zeta}^l, \mathbf{P}^l_\mathbf{k}, \mathbf{P}^l_\mathbf{u}, \mathbf{P}^l_\mathbf{b}, \mathbf{Q}^l\right)$, $\mathbf{\Theta}\left(\pmb{\zeta}^l, \mathbf{P}^l_\mathbf{k}, \mathbf{P}^l_\mathbf{u}, \mathbf{P}^l_\mathbf{b}, \mathbf{Q}^l\right)$, and $\mathbf{\Xi}\left(\pmb{\zeta}^l, \mathbf{P}^l_\mathbf{k}, \mathbf{P}^l_\mathbf{u}, \mathbf{P}^l_\mathbf{b}, \mathbf{Q}^l\right)$, respectively. Now, we can proceed as 
{\small
\begin{align}\label{convergence}
&\mathbf{MSEE}\left(\pmb{\zeta}^l, \mathbf{P}^l_\mathbf{k}, \mathbf{P}^l_\mathbf{u}, \mathbf{P}^l_\mathbf{b}, \mathbf{Q}^l\right) \nonumber\\
&\stackrel{(a)}{\leq} \mathbf{MSEE}\left(\pmb{\zeta}^{(l+1)}, \mathbf{P}^{(l+1)}_\mathbf{k}, \mathbf{P}^l_\mathbf{u}, \mathbf{P}^l_\mathbf{b}, \mathbf{Q}^l\right)\nonumber\\
&\stackrel{(b)}{=} \mathbf{\Theta}\left(\pmb{\zeta}^{(l+1)}, \mathbf{P}^{(l+1)}_\mathbf{k}, \mathbf{P}^{(l+1)}_\mathbf{u}, \mathbf{P}^l_\mathbf{b}, \mathbf{Q}^l\right)\nonumber\\
&\stackrel{(c)}{\leq} \mathbf{MSEE}\left(\pmb{\zeta}^{(l+1)}, \mathbf{P}^{(l+1)}_\mathbf{k}, \mathbf{P}^{(l+1)}_\mathbf{u}, \mathbf{P}^l_\mathbf{b}, \mathbf{Q}^l\right)\nonumber\\
&\stackrel{(d)}{=} \mathbf{\Xi}\left(\pmb{\zeta}^{(l+1)}, \mathbf{P}^{(l+1)}_\mathbf{k}, \mathbf{P}^{(l+1)}_\mathbf{u}, \mathbf{P}^l_\mathbf{b}, \mathbf{Q}^l\right)\nonumber\\
&\stackrel{(e)}{\leq} \mathbf{MSEE}\left(\pmb{\zeta}^{(l+1)}, \mathbf{P}^{(l+1)}_\mathbf{k}, \mathbf{P}^{(l+1)}_\mathbf{u}, \mathbf{P}^{(l+1)}_\mathbf{b}, \mathbf{Q}^l\right)\nonumber\\
&\stackrel{(f)}{\leq} \mathbf{MSEE}\left(\pmb{\zeta}^{(l+1)}, \mathbf{P}^{(l+1)}_\mathbf{k}, \mathbf{P}^{(l+1)}_\mathbf{u}, \mathbf{P}^{(l+1)}_\mathbf{b}, \mathbf{Q}^{(l+1)}\right),
\end{align}}
where $(a)$ follows from the definition of the optimal solution to the optimization subproblems ($\P1.2$). Besides, the equalities $(b)$ and $(d)$  hold on the grounds that the first order Taylor approximation is adopted via the SCA technique and that the objective function of ($\P2$) and ($\P3.1$) share the same value with the original function at the given point. Furthermore, $(c)$ and $(e)$ hold since the objective value of ($\P2$) and ($\P3.1$)  are tight lower-bound to that of the original problem. Finally, $(f)$ follows from Algorithm \ref{Dinkelbach_algo} whose convergence has been well understood \cite{dinkelbach1967nonlinear}. From the last inequality in \eqref{convergence}, it can be concluded that Algorithm \ref{myalgo1} is guaranteed to converge, since the feasible solution set of $\mathrm{(P)}$ is compact and its objective value is non-decreasing over iteration index $l$ and that the optimal value of MSEE is upper bounded by a finite value from the communications engineering perspective. Convergence proof for Algorithm \ref{myalgo2}  follows the similar approach and hence omitted for brevity.}
\textcolor{black}{
In terms of computational complexity, given $L$ and $M$ be the maximum convergence iteration of the outer overall BCD-SCA algorithm and the inner fractional sub-algorithm, and based on the convergence analysis of each subproblem given in the previous subsections, Algorithms \ref{myalgo1} and \ref{myalgo2} have the overall worst-case complexity of approximately  $\O\left(L(NK)^{3.5}\left(U\log_2(\frac{1}{\tilde{\varepsilon}_1}) \hspace{-1mm}+\hspace{-1mm} M \log_2(\frac{1}{\tilde{\varepsilon}_4})\right) \hspace{-1mm}+\hspace{-1mm} LN^{3.5}\log_2(\frac{1}{\varepsilon_2\varepsilon_3}) \right)$ and $\O\left(L(NK)^{3.5}\max\left(U\log_2(\frac{1}{\tilde{\varepsilon}_1}) , M \log_2(\frac{1}{\tilde{\varepsilon}_4}) \right)\right)$, respectively. Both are in polynomial time order and applicable to the energy-hungry UUR scenarios.}

\section{Numerical results and discussion}\label{sec:numerical}
In this section, we provide some numerical simulations to evaluate the secrecy performance of the considered THz-UUR scheme, and demonstrate the effectiveness of our proposed designs in comparison with some benchmarks. Unless otherwise stated, all simulation parameters, adopted from the literature, are given in Table \ref{table:1}.

Since the initial feasible point is important to use the proposed BCD-SCA-Dinkelbach based algorithms and significantly impacts their convergence performance, we explain how we can obtain an initial feasible UAV's trajectory and velocity, network transmission powers, and user schedulings. The initial UAV's trajectory is assumed to be a circular path centered at the BS's location with radius $R_u=\|\q_b-\q_I\|$, provided that UAV's instantaneous velocity constraint $\mathrm{C12}$ is satisfied, and $T \geq T^{min}_{cir} \treq \frac{2 \pi R_u}{V_{max}}$, where $T^{min}_{cir}$ is the minimum required time for circular trajectory. However, if $T^{min}_{cir} > T \geq T^{min}_{cyc} \treq \frac{2R_u}{V_{max}}$ (i.e., at least cyclic trajectory was possible with minimum required time $T^{min}_{cyc}$), then one could use any cyclic shape as long as $\mathrm{C10-C14}$ are satisfied. Here, we consider a Piriform trajectory with discretized equations given by $\q^i=[\mathbf{x}^i; \mathbf{y}^i]$ with $\mathbf{y}^i = A_y(1-\sin(\mathbf{t}))\cos(\mathbf{t})$ and $\mathbf{x_u} = R_u({\sin(\mathbf{t})+1})/{2}$ in which $\mathbf{t}_{1\times N}$ indicates the linearly spaced vector in $[\frac{\pi}{2}, \frac{5\pi}{2}]$. Further, the constant $A_y$ can be obtained efficiently via a simple 1D search in the range of $[R_u, 0]$ or simply set to zero. The UAV's initial velocity vector $\v^i$ is then followed  by $\v^i[n]=\frac{\q^i[n+1] - \q^i[n]}{\delta_t}, \forall n \setminus N$ and $\v^i[N] = \v^i[N-1]$. 

Having obtained an initial feasible UAV's path planning ($\Q^i=\{\q^i, \v^i\}$), we can set the initial UUR's relaying power and the BS's jamming transmission power as $\P^i_u = \{p^i_u[n]=p^{ave}_u, \forall n\}$ and $\P^i_b = \{p^i_b[n]=p^{ave}_b, \forall n\}$, respectively.
The UEs' initial transmit powers are set as 
\[\P^i_k =\left\{ p^i_k[n]=
\begin{cases}
p^{ave}_{k}, & \zeta_k[n]=1\\
0, & \zeta_k[n]=0,
\end{cases}~~\forall k, n\right\}\]
and the UEs are scheduled equally (e.g., $\lfloor\frac{N}{K}\rfloor$ times each), i.e.,  $\pmb{\zeta}^i$ is obtained such that the constraint $\mathrm{C1}$ holds. 

After identifying the initial feasible point for the iterative optimization algorithms, we consider different benchmark schemes, all of which are detailed below and labelled in the following figures,  to demonstrate the superiority of our proposed MSEE-based optimization algorithms.

\begin{itemize}
    \item \textbf{MSEE-Seq}: \textbf{M}inimum \textbf{S}ecrecy \textbf{E}nergy \textbf{E}fficiency optimization scheme using the \textbf{Seq}uential BCD-based subproblem maximization as given in Algorithm \ref{myalgo1}.
    
     \item \textbf{MSEE-MI}: \textbf{M}inimum \textbf{SEE} optimization scheme based on the \textbf{M}aximum \textbf{I}mprovement subproblem maximization as given in Algorithm \ref{myalgo2}.
     
     \item \textbf{MSEE-FTrj}: \textbf{M}inimum \textbf{SEE} design with \textbf{F}ixed \textbf{Tr}a\textbf{j}ectory and velocity, i.e., $\Q=\{\q^i[n], \v^i[n], \forall n\}$ based \textbf{M}inimum \textbf{SEE} optimization scheme using the MI-BCD approach via jointly optimizing the transmit power allocations and user scheduling, i.e., $\P_k$, $\P_u$, $\P_b$, and $\pmb{\zeta}$.
     
     \item \textbf{MSEE-FPow}: Jointly designing the trajectory and velocity of the UUR for the \textbf{M}inimum \textbf{SEE} design via solving the corresponding subproblem, i.e., optimizing $\Q$, while keeping the power allocations and user scheduling parameters fixed, i.e., setting them equal to the initial feasible values.
     
     \item \textbf{MASR-Seq}: Optimizing \textbf{M}inimum \textbf{ASR} given in \eqref{ave_rsec} while ignoring the UUR's flight power limit using the \textbf{Seq}uential BCD approach to iteratively improve $\Q, \P, \pmb{\zeta}$.

\end{itemize}

\begin{table}[t]
\caption{System parameters}
\centering
\resizebox{0.9\columnwidth}{!}{%
\begin{tabular}{l c} 
 \hline  \hline
 Simulation parameter (notation) & Value \\ [0.5ex] 
 \hline 
 Gaussian noise power spectral density ($\sigma^2_b, \sigma^2_u$) & $-196$ dBm/Hz \\
Operating frequency ($f$) & $0.8$ THz  \\
Bandwidth ($B$) & $10$ GHz\\
Reference channel power gain ($\beta_0$) &  $-71$ dB\\
Molecular absorption coefficient ($a_f$) & 0.005\\
UEs' average transmission power ($p^{ave}_k, \forall k$) & $20$ dBm\\
UEs' peak transmission power ($p^{max}_k \treq 4 p^{ave}_k, \forall k$) & $0.4$ W\\
UUR's average relaying power  ($p^{ave}_u$) & $0.4$ W\\
UUR's peak relaying power ($p^{max}_u \treq 4 p^{ave}_u$) & $1.6$ W\\
BS's average jamming power  ($p^{ave}_b$) & $0.5$ W\\
BS's peak jamming power ($p^{max}_b \treq 4 p^{ave}_b$) & $2$ W\\
UAV's operational altitude ($H$) & $10$ m\\
UAV's initial/final 2D location per flight ($\q_I$) & [$25$ m , $0$ m]\\
BS's horizontal location ($\q_b$) & [$0$ m, $0$ m]\\
Inner and outer radii of the region ($R_1, R_2$) & ($20$ m, $30$ m)\\
Number of randomly distributed UEs ($K$)& 5\\
Average flight power consumption budget ($P_{lim}$) & $200$ W\\
UAV's maximum velocity ($v^{max}_u$)& $20$ m.s\textsuperscript{-1}\\
UAV's maximum acceleration ($a^{max}_u$) & $5$ m.s\textsuperscript{-2}\\
Blade angular velocity ($\omega$) & $300$ rad.s\textsuperscript{-1} \\
Rotor radius ($r$) & $0.4$ m\\
Air density ($\rho$) & $1.225$ kg.m\textsuperscript{-3}\\
Rotor solidity factor ($s$) & 0.05\\
Rotor  disk area ($A\treq \pi r^2$) & $0.503$ m\textsuperscript{2}\\
Average hovering induced rotor velocity ($\nu_0 \treq \sqrt{\frac{W_u}{2\rho A}}$) & $4.03$ m.s\textsuperscript{-1}\\
Fuselage drag ratio ($d_0$) & 0.6\\ 
Profile drag coefficient ($\delta$) & 0.012\\
Incremental correction coefficient of induced power $(k_i)$& 0.1\\
UAV's weight ($W_t$) & $20$ Kg.m.s\textsuperscript{-2}\\
Blade profile constant ($P_0\treq \frac{\delta}{8}\rho s A \omega^3 r^3$) & $   79.856$ W\\
Parasite component constant ($P_i \treq (1+k_i)W_t\sqrt{\frac{W_t}{2\rho A}}$) & $88.63$ W \\
Mission time ($T$) & $10$ s\\
Time slot duration ($\delta_t$) & $0.1$ s\\
Overall iterative algorithm's convergence tolerance  ($\epsilon_1$) & 10\textsuperscript{-3}\\
Convergence tolerance of Dinkelbach's programming  ($\epsilon_2$) & 10\textsuperscript{-4}
\\[1ex] 
 \hline  \hline
\end{tabular}}
\label{table:1}
\end{table}

\begin{figure}[t]
\centerline{\includegraphics[width= 0.8\columnwidth]{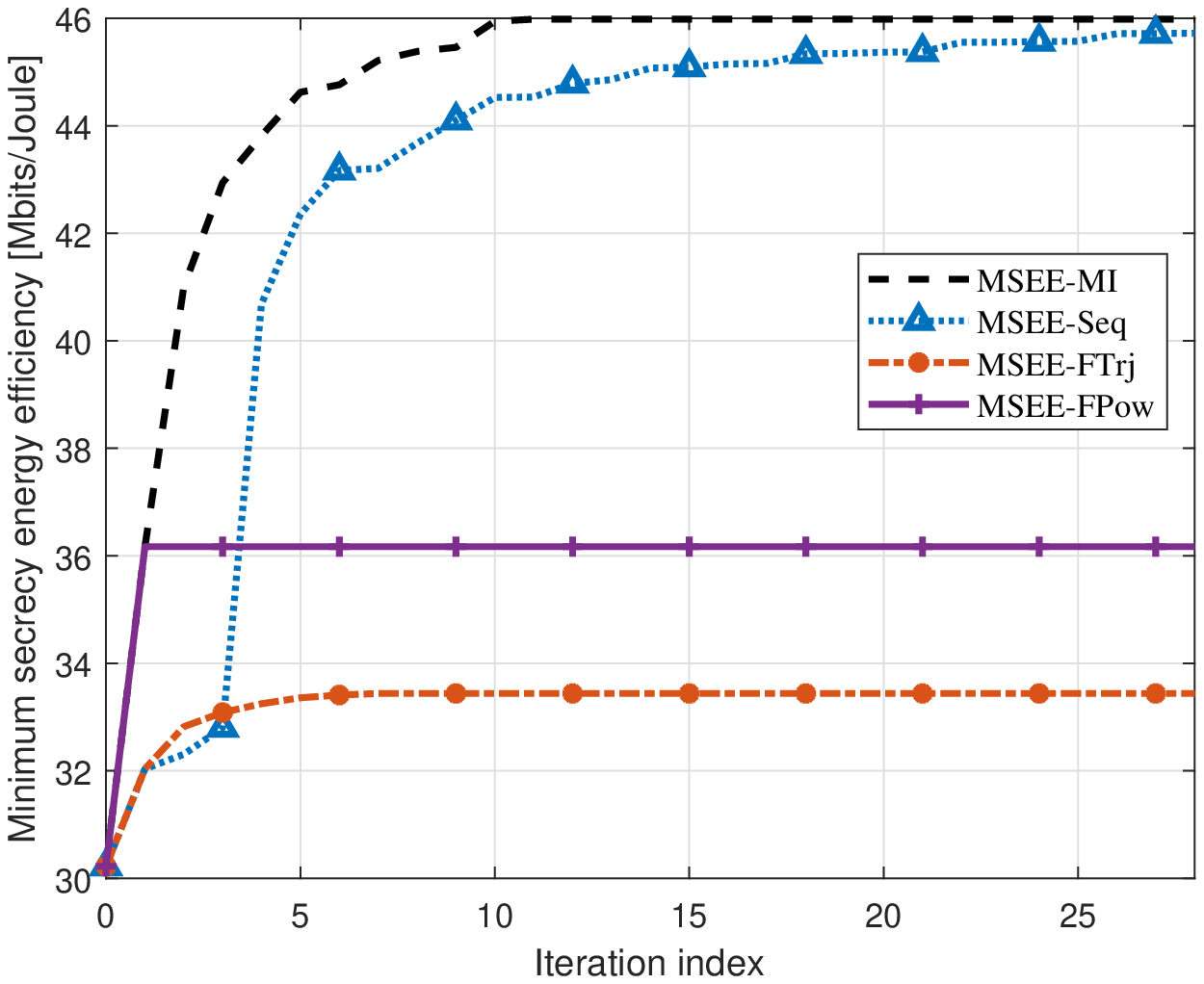}}
\caption{\textcolor{black}{Convergence verification of different MSEE optimization based algorithms.}}
\label{sim:fig1}
\end{figure}
Fig. \ref{sim:fig1} depicts the convergence of the proposed iterative algorithms for $T=100$ s. We can see that both benchmark schemes \textit{MSEE-FTrj} and \textit{MSEE-FPow} converge quickly; however, they both can achieve significantly lower MSEE performance than the proposed joint design of trajectory, power control and user scheduling schemes, i.e., \textit{MSEE-MI} and \textit{MSEE-Seq}. \textcolor{black}{Specifically, \textit{MSEE-MI} not only converges relatively faster than \textit{MSEE-Seq}, i.e., $12$ against $29$ iterations, but offers slightly higher MSEE than that of its counterpart, as well. However, they achieve approximately $52.1\%$ MSEE improvement, while  \textit{MSEE-FPow} and \textit{MSEE-FTrj} can only increase the MSEE by $19.7\%$ and $10.6\%$, respectively.}

\begin{figure}[t!]
 \centering
    \begin{subfigure}[t]{\columnwidth}
        \centering
        \includegraphics[width= 0.8\columnwidth]{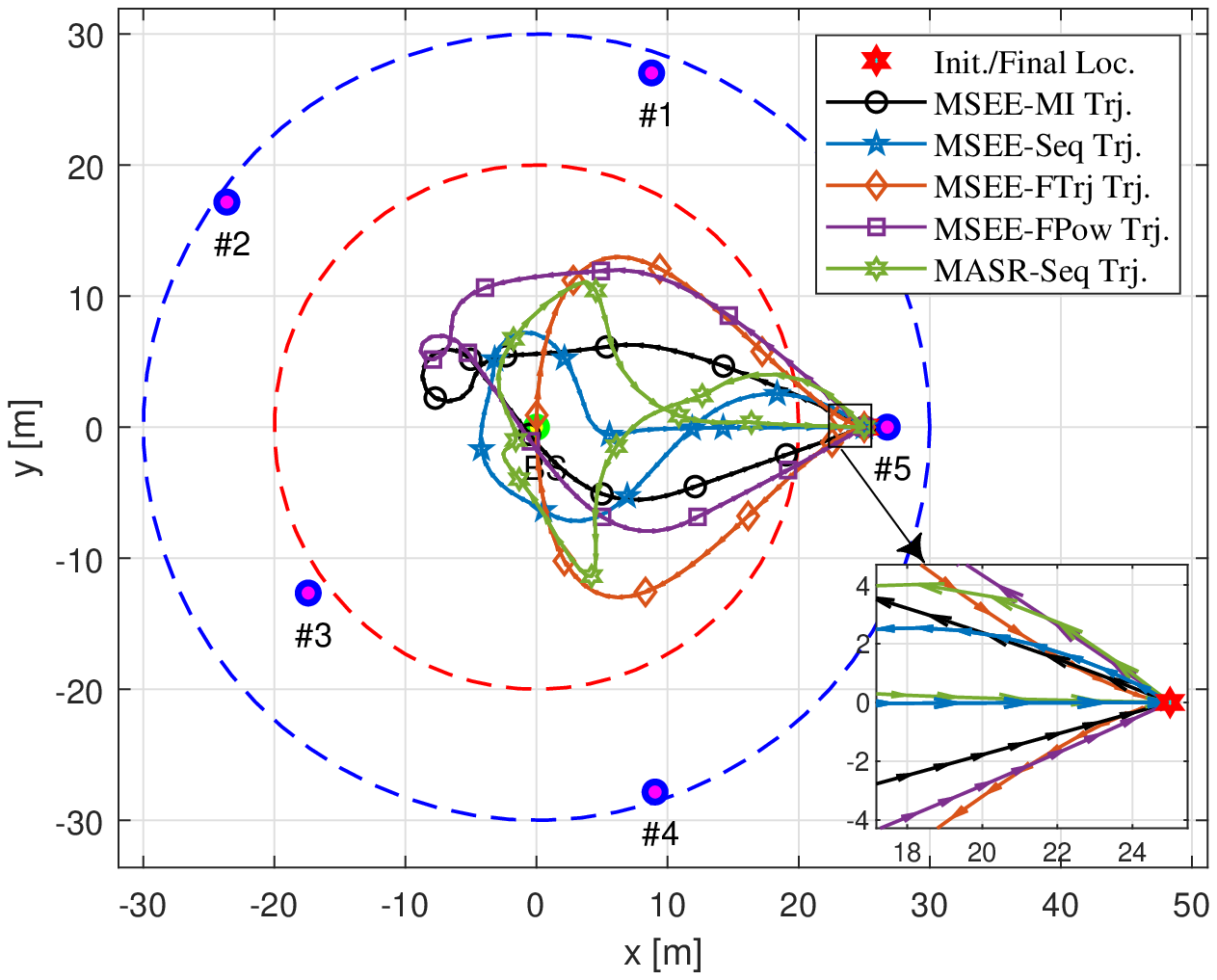}
        \caption{$T = 7$ s.}
        \label{sim:fig3_1}
    \end{subfigure} 
     
 \begin{subfigure}[t]{\columnwidth}
        \centering
        \includegraphics[width= 0.8\columnwidth]{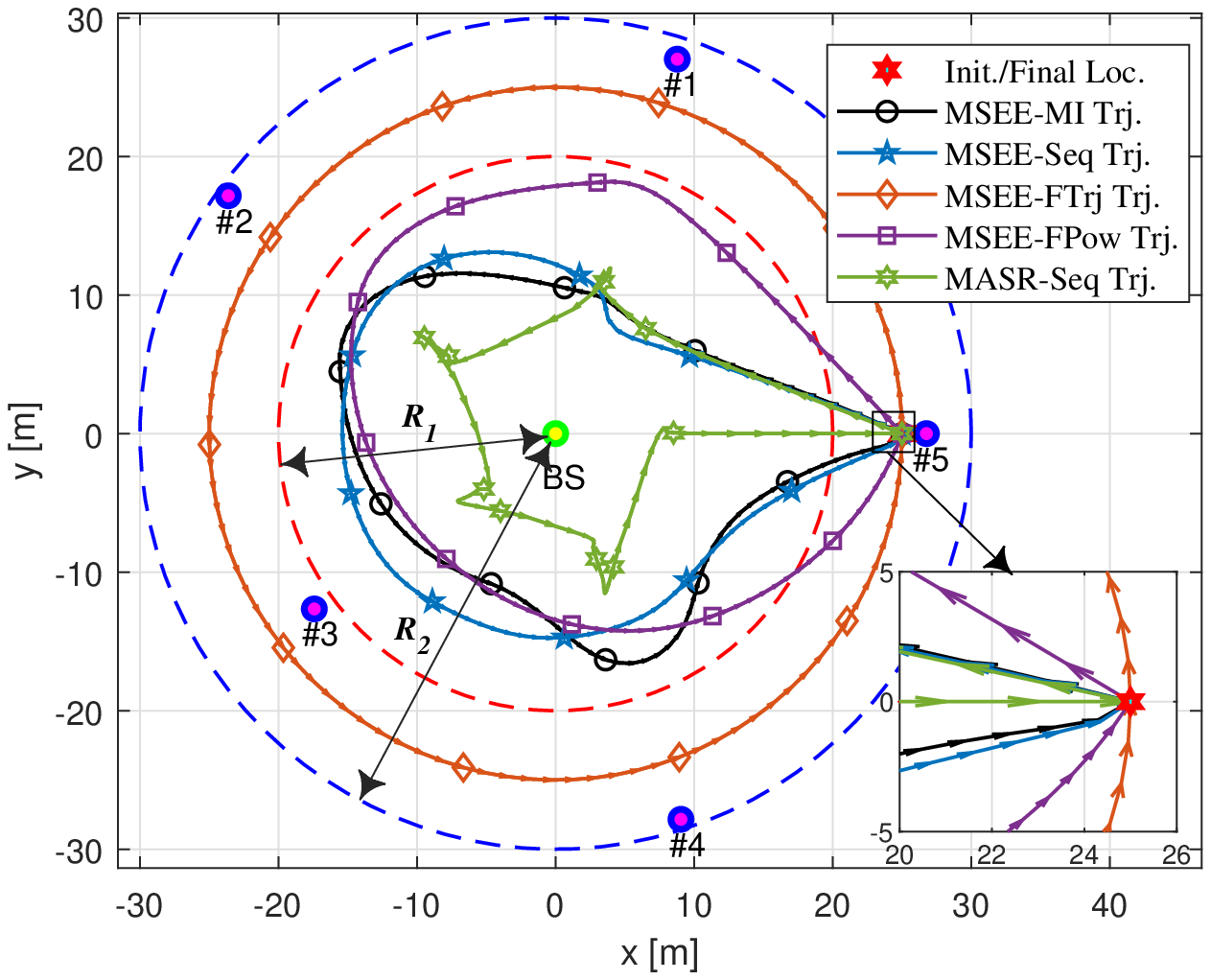}
        \caption{$T = 10$ s.}
        \label{sim:fig3_2}
\end{subfigure}
    
 \begin{subfigure}[t]{\columnwidth}
        \centering
        \includegraphics[width= 0.8\columnwidth]{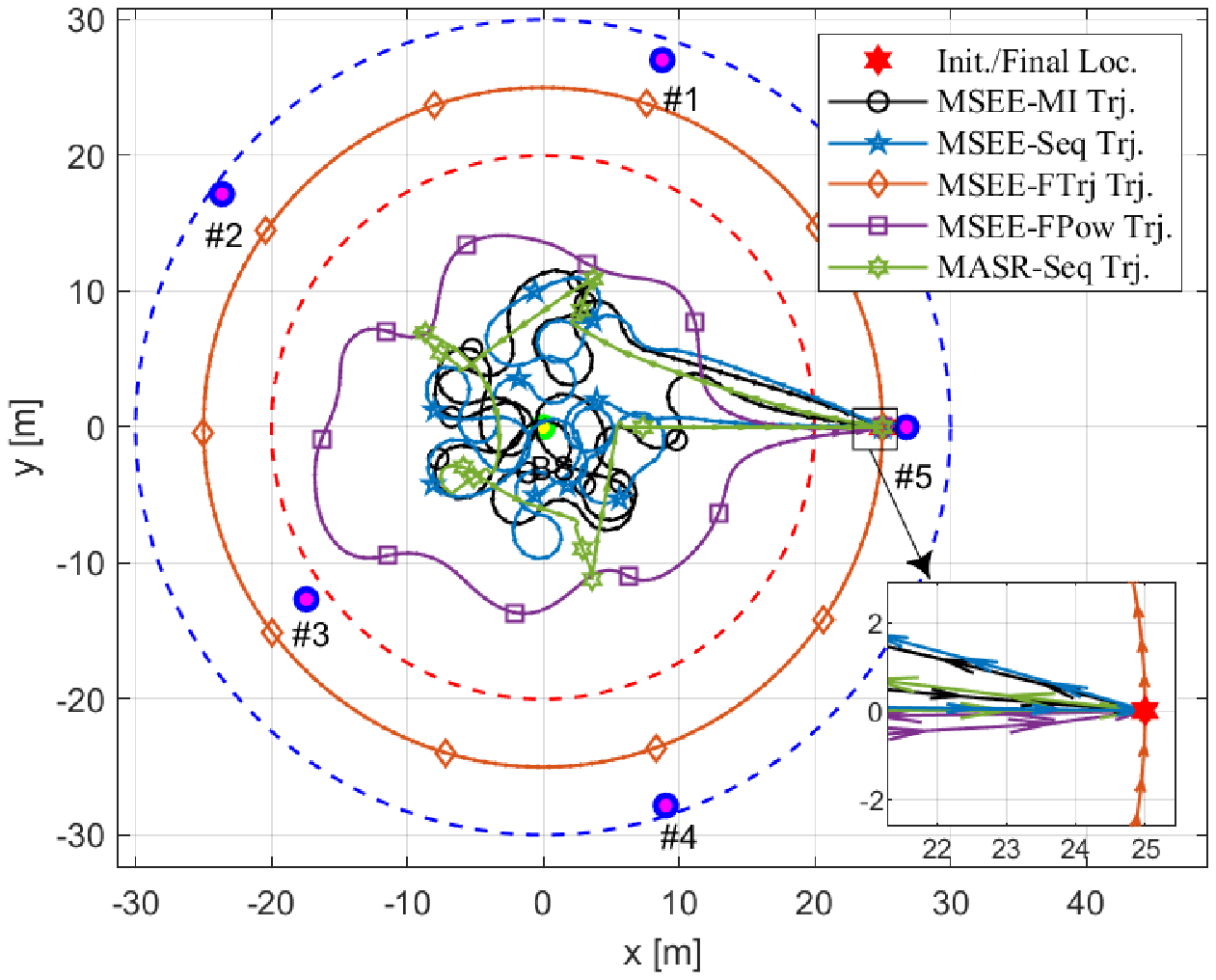}
        \caption{$T = 20$ s.}
        \label{sim:fig3_3}
\end{subfigure}
 \caption{\textcolor{black}{\label{sim:fig3} Designed trajectory based on different optimization algorithms and mission time $T$.}} 
\end{figure}


Fig. \ref{sim:fig3} illustrates UUR’s trajectories according to different optimization schemes with different mission duration $T=\{7, 10, 20\}$ s. 
We note that when the initial circular trajectory is impossible due to significantly low mission time, e.g., $T=7$ s, and owing to the UAV’s physical system limitations, the crucial task of path-planning can be efficiently designed based on a Piriform trajectory initialization as shown in Fig. \ref{sim:fig3_1}. However, when $T$ is sufficiently high, e.g., $T=10$ s or $T=20$ s, the baseline trajectory of circular shape can be utilized.
It should be mentioned that the curve belonging to the \textit{MSEE-FTrj} does represent the initial feasible cyclic trajectory based on the circular or Piriform shapes, and the other curves illustrate the optimized UUR’s trajectory according to the different algorithms. It can be observed that the optimized trajectories are much more complicated than the initial ones, particularly when the UUR enjoys relatively higher mission time and flexibility, i.e., $T=20$ s, as shown in Fig. \ref{sim:fig3_3}.  Notice that UUR should fly towards UEs’ locations to obtain data with low power. This, in turn, can significantly increase the chance of information leakage due to a stronger wiretap link and less effective BS’s jamming. Thus, UUR prefers to stay not too far from the BS. Overall, we see that the path planning makes UUR adjust trajectory through the best possible path, efficiently forming the distances between the UUR, selected UEs, and the BS such that a balanced trade-off between the channel conditions for the friendly jamming transmission in the first phase as well as the aerial relaying in the second phase of transmission improves the overall energy-efficient secrecy performance. Further, we observe from Fig. \ref{sim:fig3_2} that the MSEE-based trajectories are generally smoother than that of the \textit{MASR-Seq}, wherein the UUR prefers to quickly reach the best locations providing service for the designated UEs while hovering. This observation implies that the MSEE optimization demands
such smooth paths  for a lower flight power consumption of the UUR, in contrast to the \textit{MASR-Seq} design where the UUR’s velocity might harshly fluctuate for the minimum ASR (MASR) improvement if required.  
\begin{figure}[t]
\centerline{\includegraphics[width= 0.8\columnwidth]{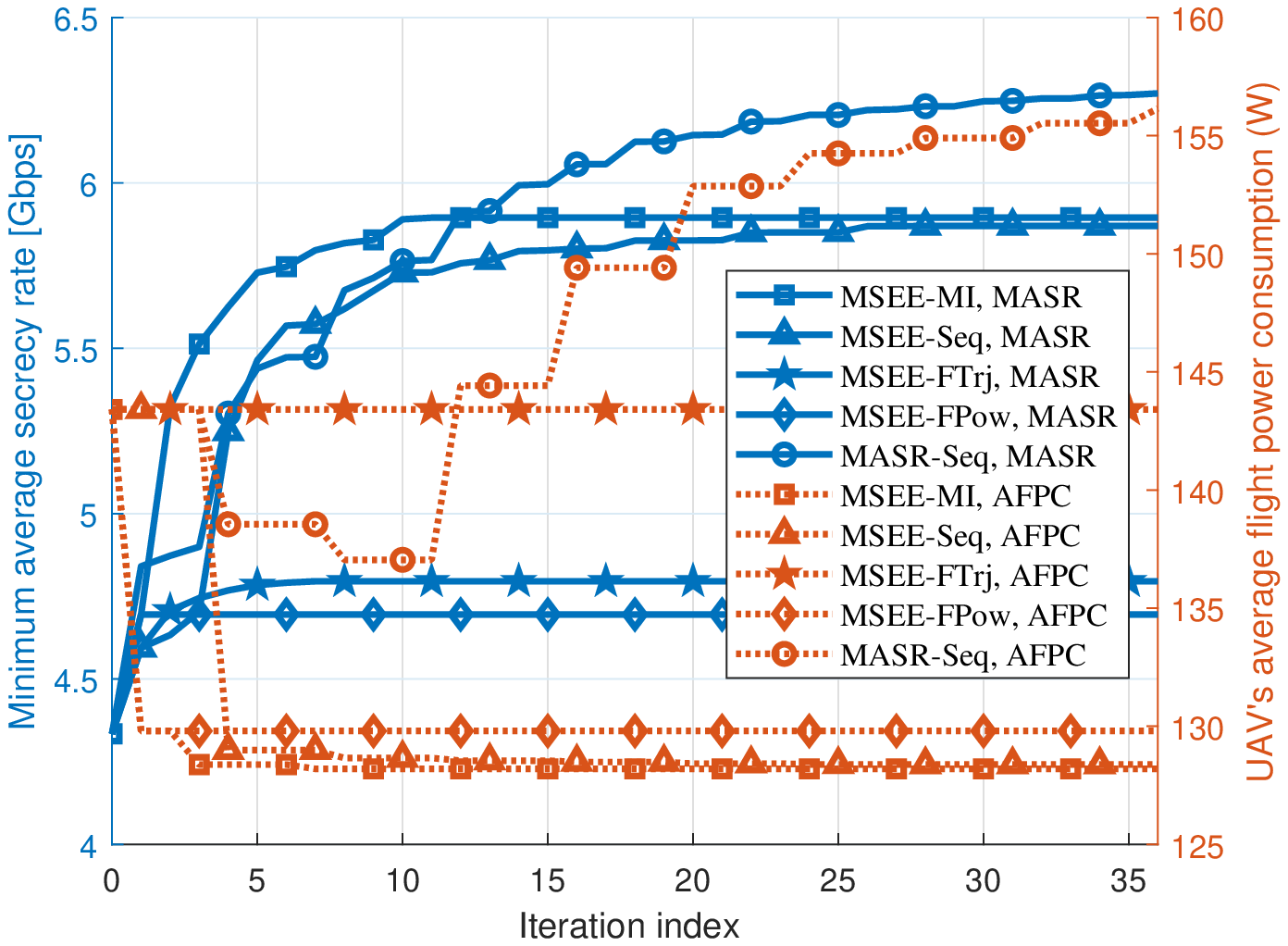}}
\caption{\textcolor{black}{Comparison between MASR and AFPC against iteration index for different algorithms.}}
\label{sim:fig4}
\end{figure}

Fig. \ref{sim:fig4} illustrates the MASR and the average flight power consumption (AFPC) against iteration indices for different schemes.
It is crystal clear that for the MSEE-based algorithms, the MASR and the AFPC performances tend to be non-decreasing and non-increasing, respectively. In contrast, for \textit{MASR-Seq} scheme, the AFPC first decreases down to some level then increases until convergence in $36$ iterations. We also note that this scheme can achieve slightly higher MASR performance than our proposed schemes but at the cost of significantly higher AFPC, resulting in lower MSEE ($43.13$ Mbits/Joule). This reinforces the significance of energy-efficient secure UAV-system design, which conventional works have somewhat ignored.
\begin{figure}[t]
\centerline{\includegraphics[width= 0.8\columnwidth]{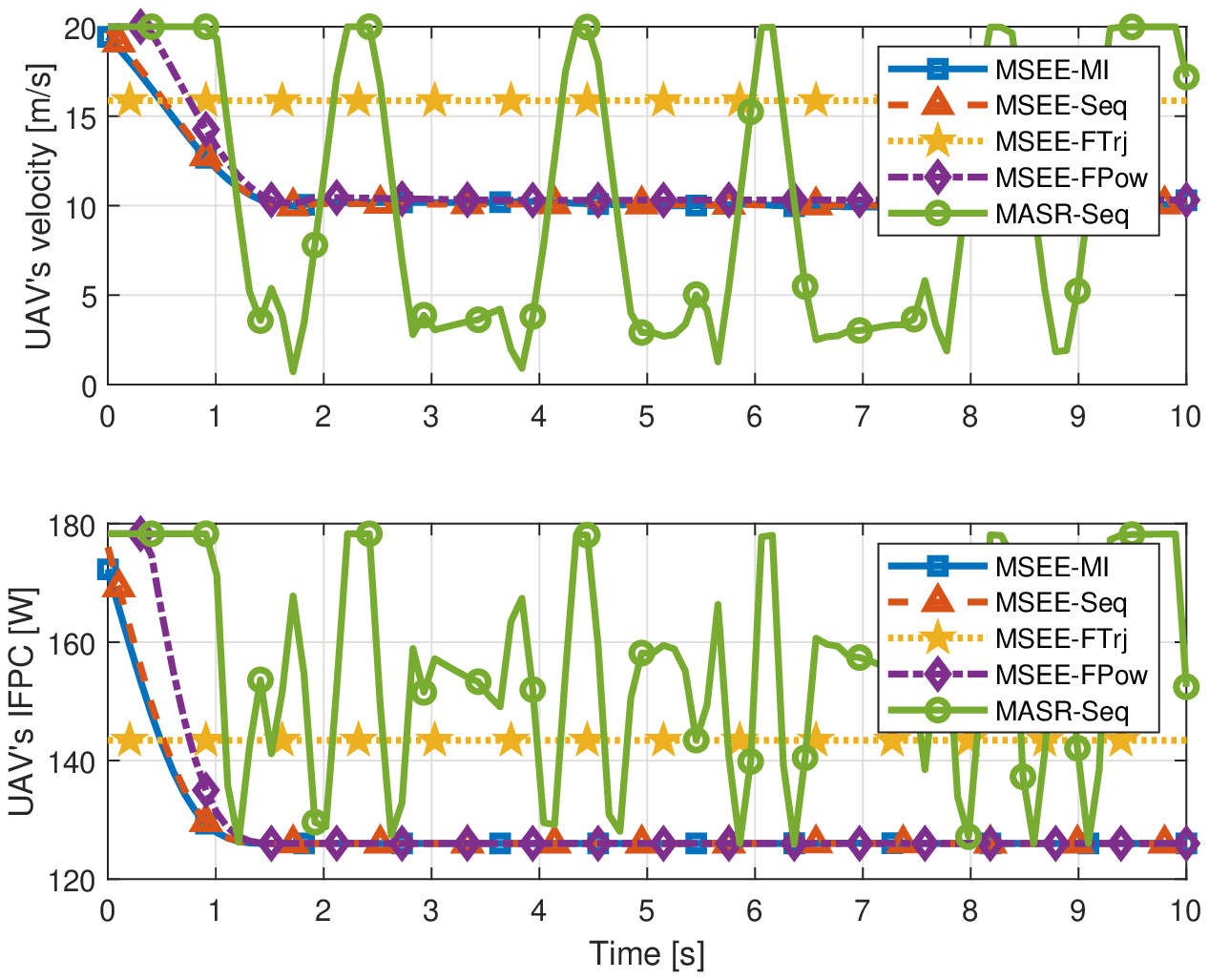}}
\caption{\textcolor{black}{UAV's instantaneous velocity and flight power consumption profile according to different schemes.}}
\label{sim:fig5}
\end{figure}

Fig. \ref{sim:fig5} is plotted to demonstrate how the UAV's velocity (Vel.) and the instantaneous flight power consumption (IFPC) are adjusted over time based on different algorithms. The curves labeled with \textit{"MSEE-FTrj"} basically represent the initial UAV's velocity and IFPC for all the other scenarios. We observe that all the trajectory optimization schemes make UAV fly with roughly less speed variation for a relatively more extended period of time (e.g., from $2$s to $10$s) to satisfy mission requirements as well as improve the MSEE performance. Specifically, UUR starts at a high initial speed to fast reach the targeted location while gradually decreasing the speed down to some appropriate level, maintaining comparatively unchanged afterward for the sake of efficient power consumption purposes. Nonetheless, based on the \textit{MASR-Seq} scheme, the velocity tends to be minimal while confidential relaying towards the scheduled UE; however, it changes drastically, enabling the mission to be accomplished by the end of the specified time. As it can be seen, this approach consumes relatively high flight power.

\begin{figure*}[t!]
\centering
\hspace{12mm}\begin{subfigure}[t]{0.3\textwidth}
  \centering
  \includegraphics[width= \textwidth]{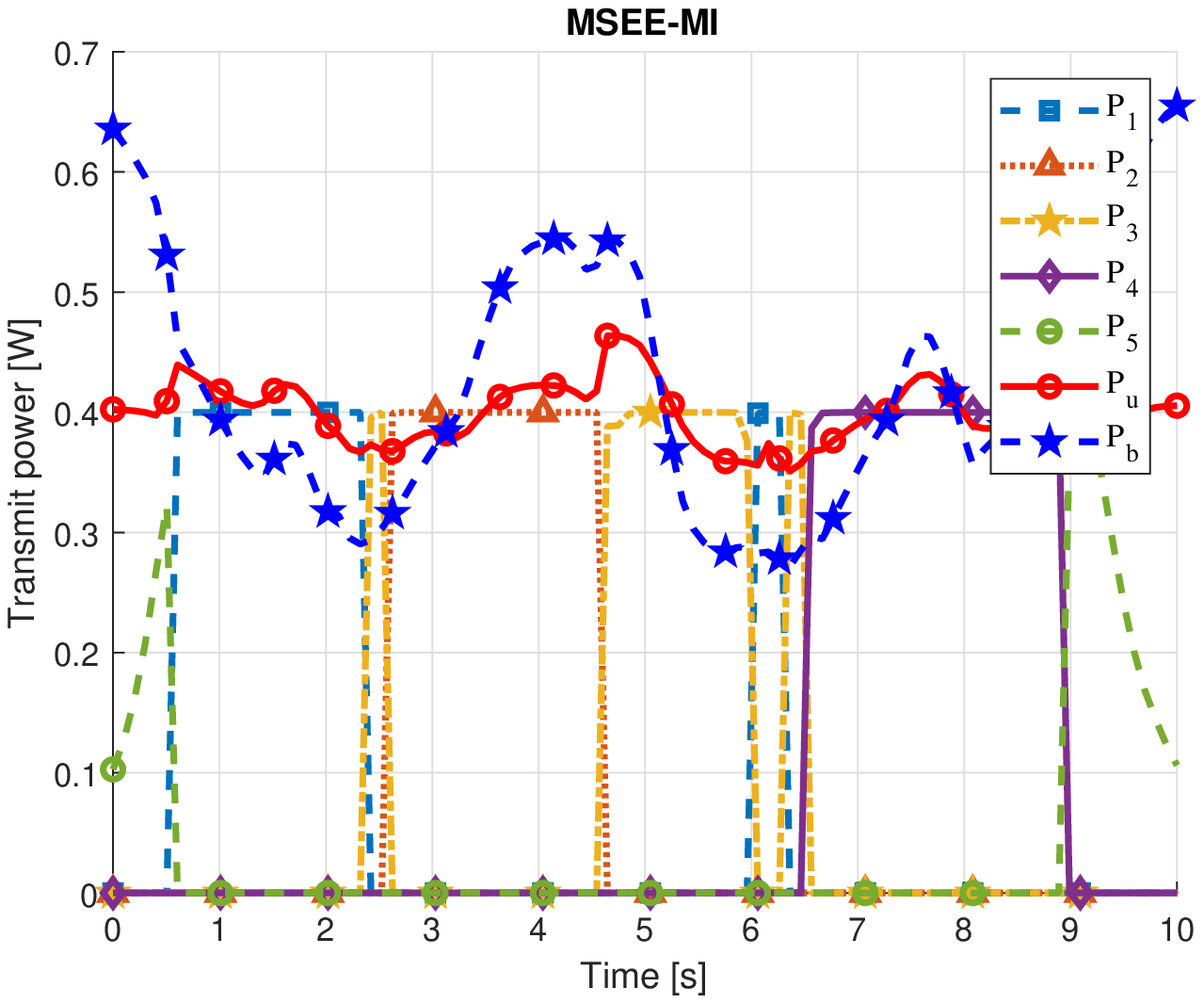}
  \caption{}
  \label{fig:sfig1}
\end{subfigure}
\begin{subfigure}[t]{0.3\textwidth}
  \centering
  \includegraphics[width= \textwidth]{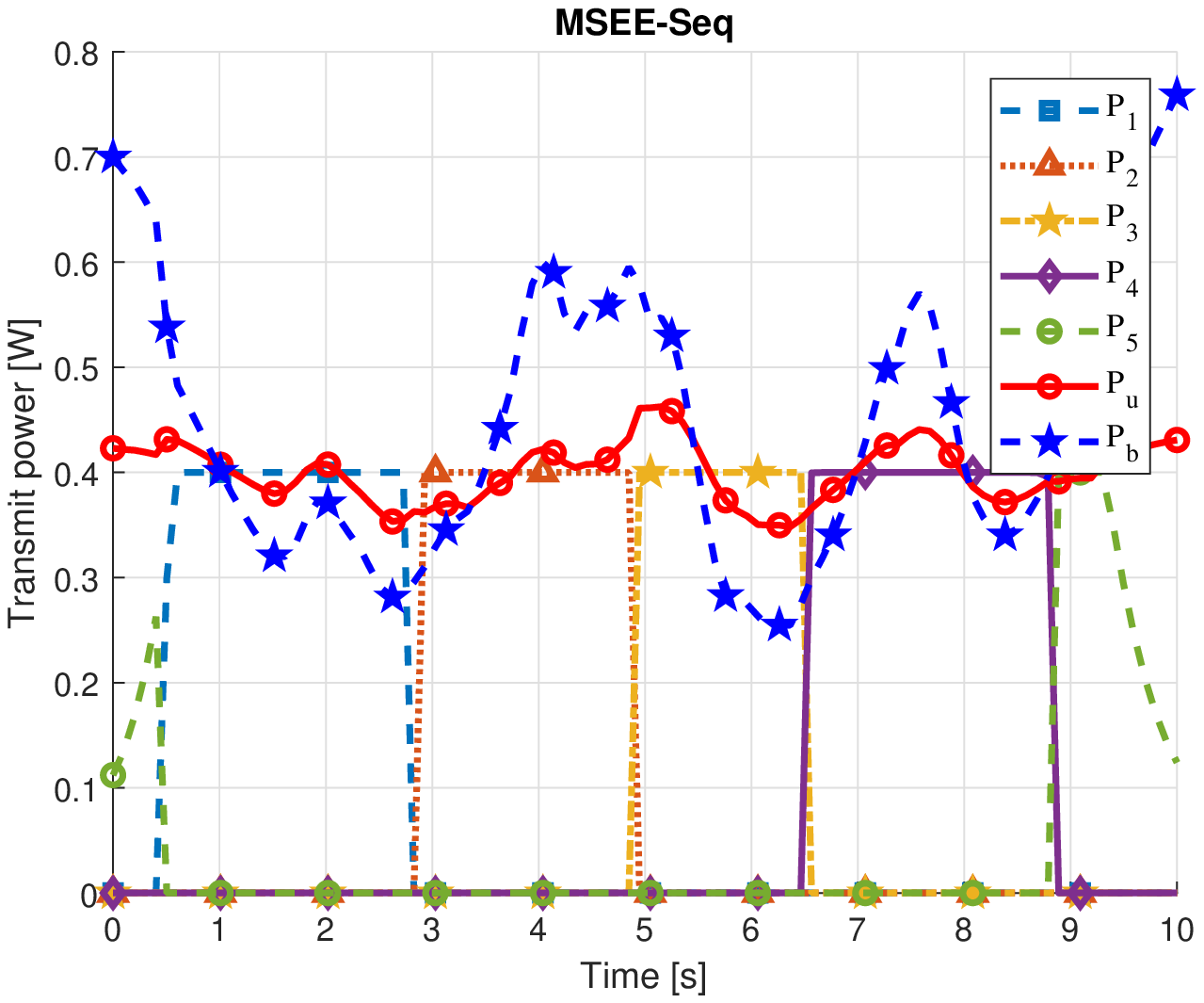}
  \caption{}
  \label{fig:sfig2}
\end{subfigure}
\newline
\begin{subfigure}[t]{0.3\textwidth}
  \centering
  \includegraphics[width=\textwidth]{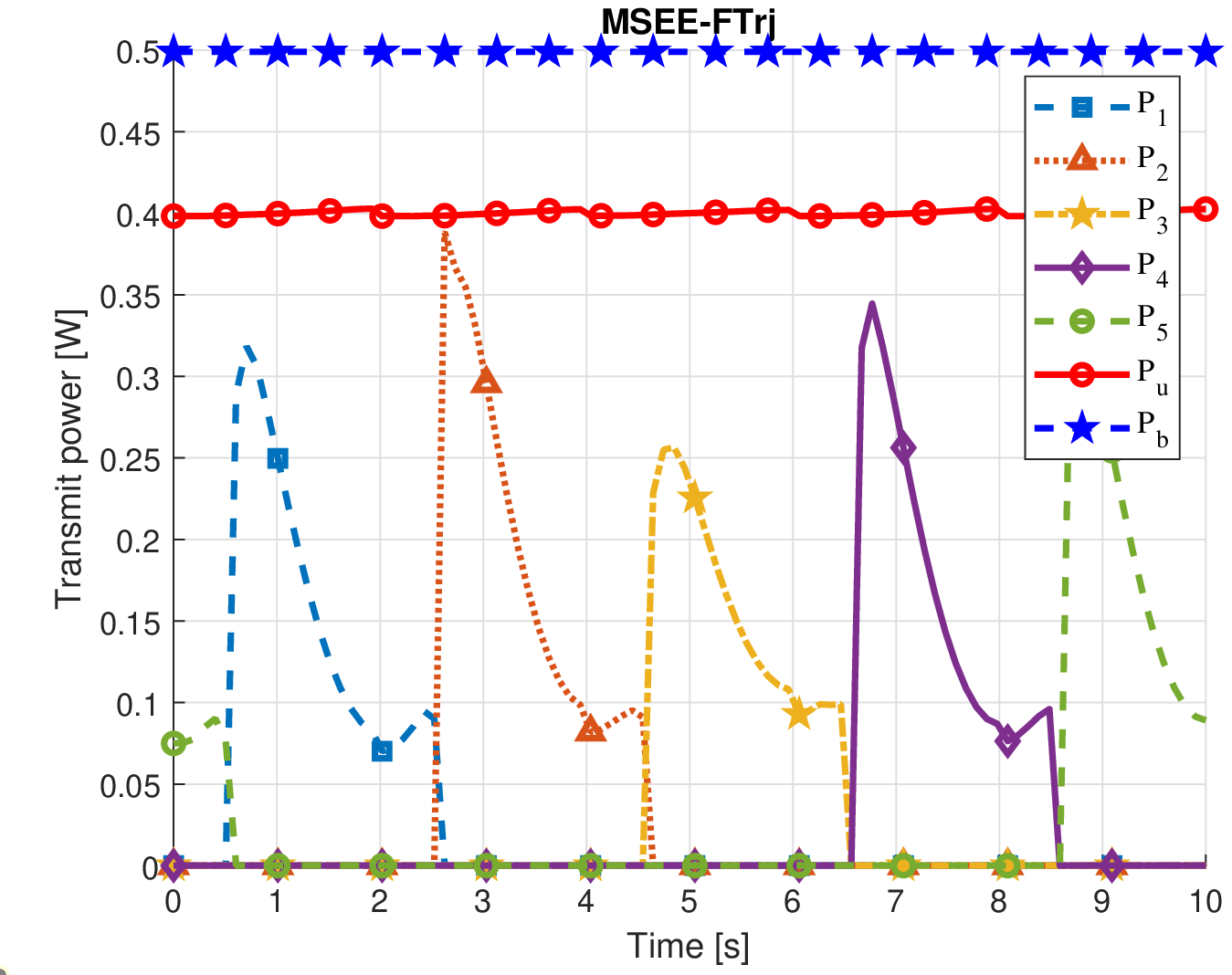}
  \caption{}
  \label{fig:sfig3}
\end{subfigure}
\begin{subfigure}[t]{0.3\textwidth}
  \centering
  \includegraphics[width= \textwidth]{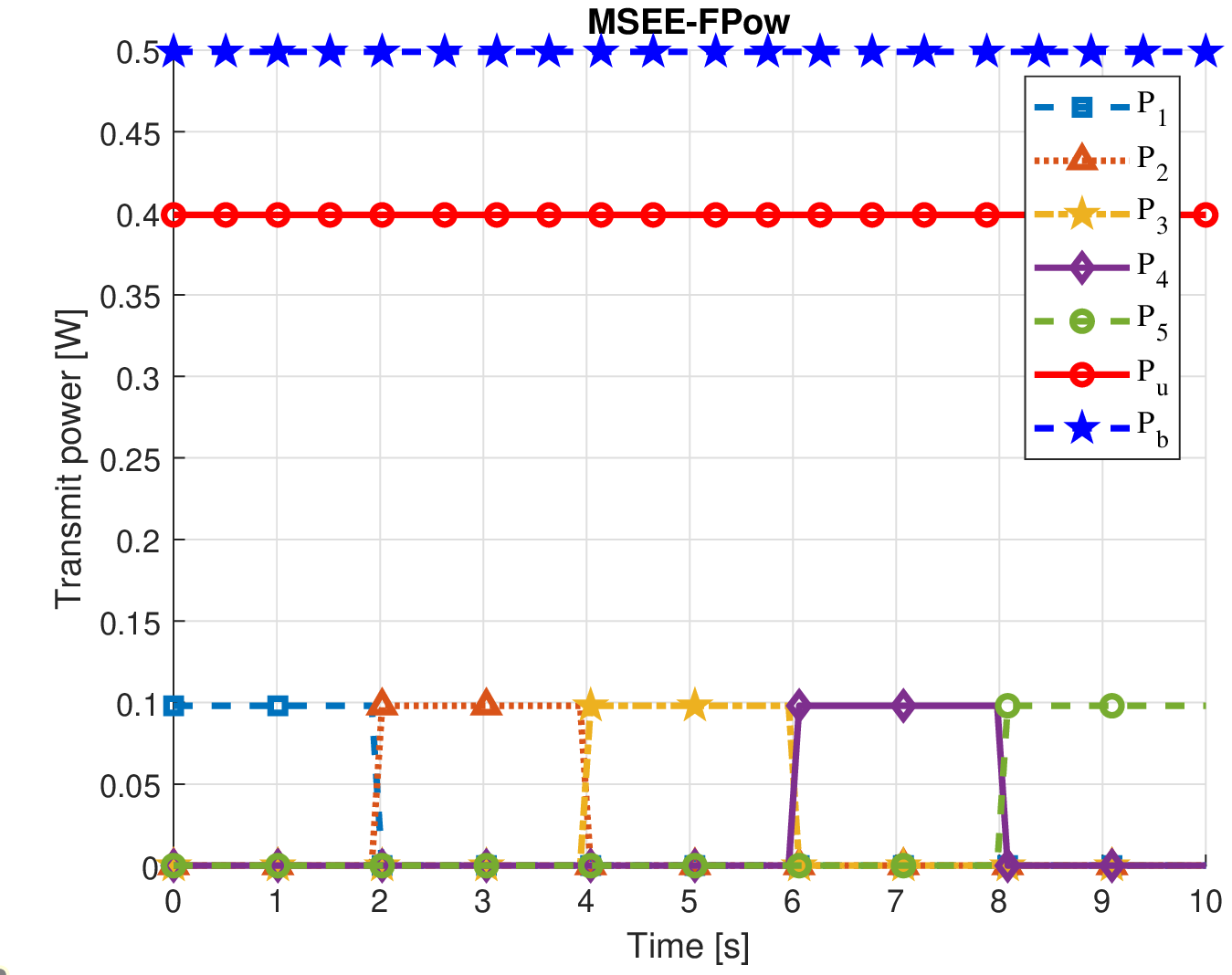}
  \caption{}
  \label{fig:sfig4}
\end{subfigure}
\begin{subfigure}[t]{0.3\textwidth}
  \centering
  \includegraphics[width=\textwidth]{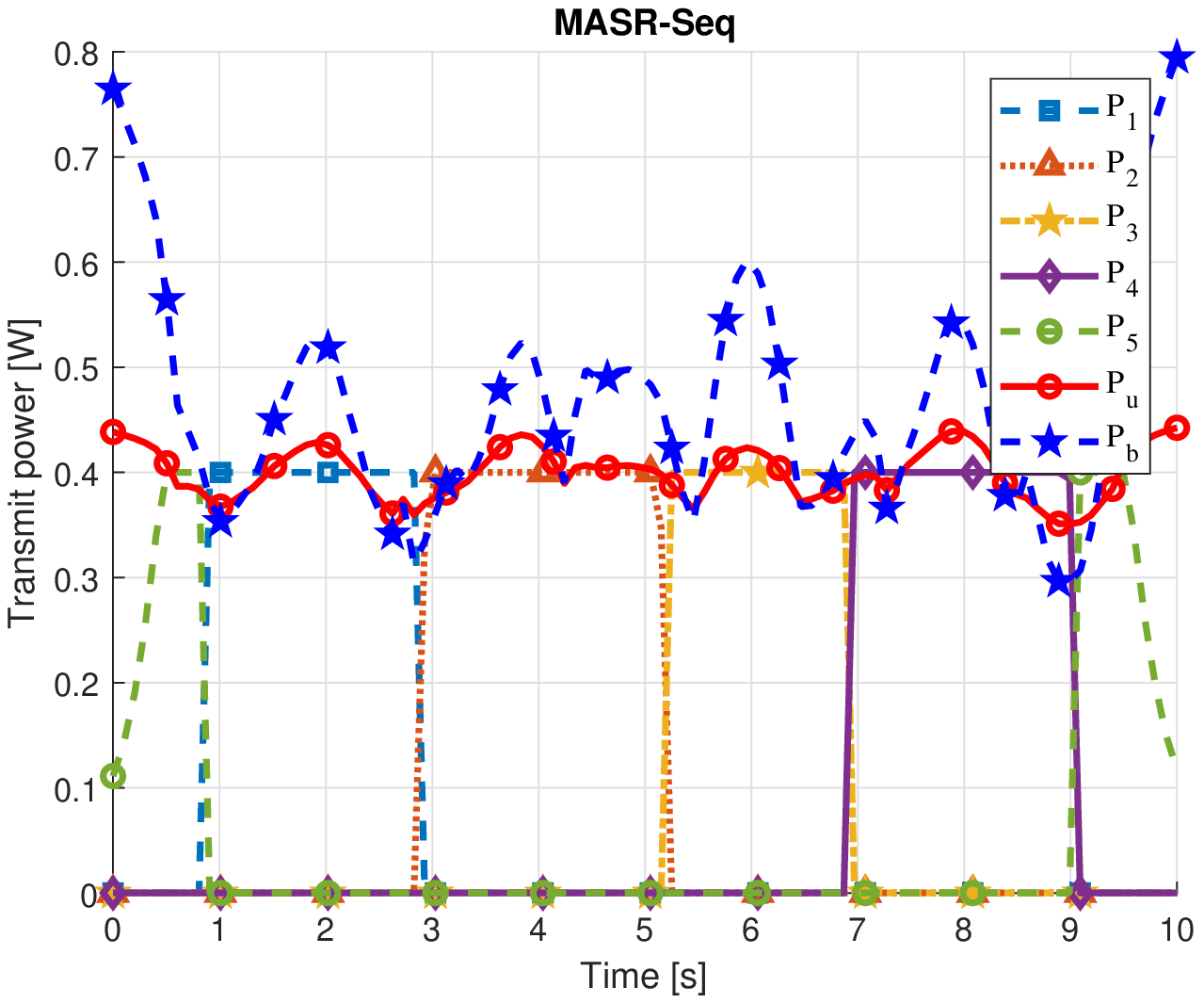}
  \caption{}
  \label{fig:sfig5}
\end{subfigure}%
\caption{\textcolor{black}{Transmit power allocation and user scheduling against time for different schemes.}}
\label{fig:fig6}
\end{figure*}

Fig. \ref{fig:fig6} illustrates the joint power allocation and user scheduling vs. time for different MSEE optimization algorithms. The sub-figure \ref{fig:sfig4} represents the non-optimal but feasible power allocations and user scheduling adopted for initialization of all the schemes. Initially, UUR is very close to $\mathrm{UE}_{5}$ but far from the BS. Hence, $\mathrm{UE}_{5}$ is scheduled due to a possibly better channel condition than the others, and the BS jams in high power while $\mathrm{UE}_{5}$ keeps low power.  For \textit{MSEE-FTrj}, UUR follows the circular trajectory while maintaining the same distance from the BS that has a constant jamming power. In contrast, sub-figures \ref{fig:sfig1}, \ref{fig:sfig2}, and \ref{fig:sfig5} show that at initial stage, $\mathrm{UE}_{5}$ increases power when UUR heads towards the BS and the BS reduces jamming power. Further, these UEs are scheduled unequally, but during their scheduling, except $\mathrm{UE}_5$, they need to utilize their maximum transmission powers for sending information, and the relaying power slightly fluctuates around $p^{ave}_u$.



\begin{figure}[t]
\centerline{\includegraphics[width= 0.8\columnwidth]{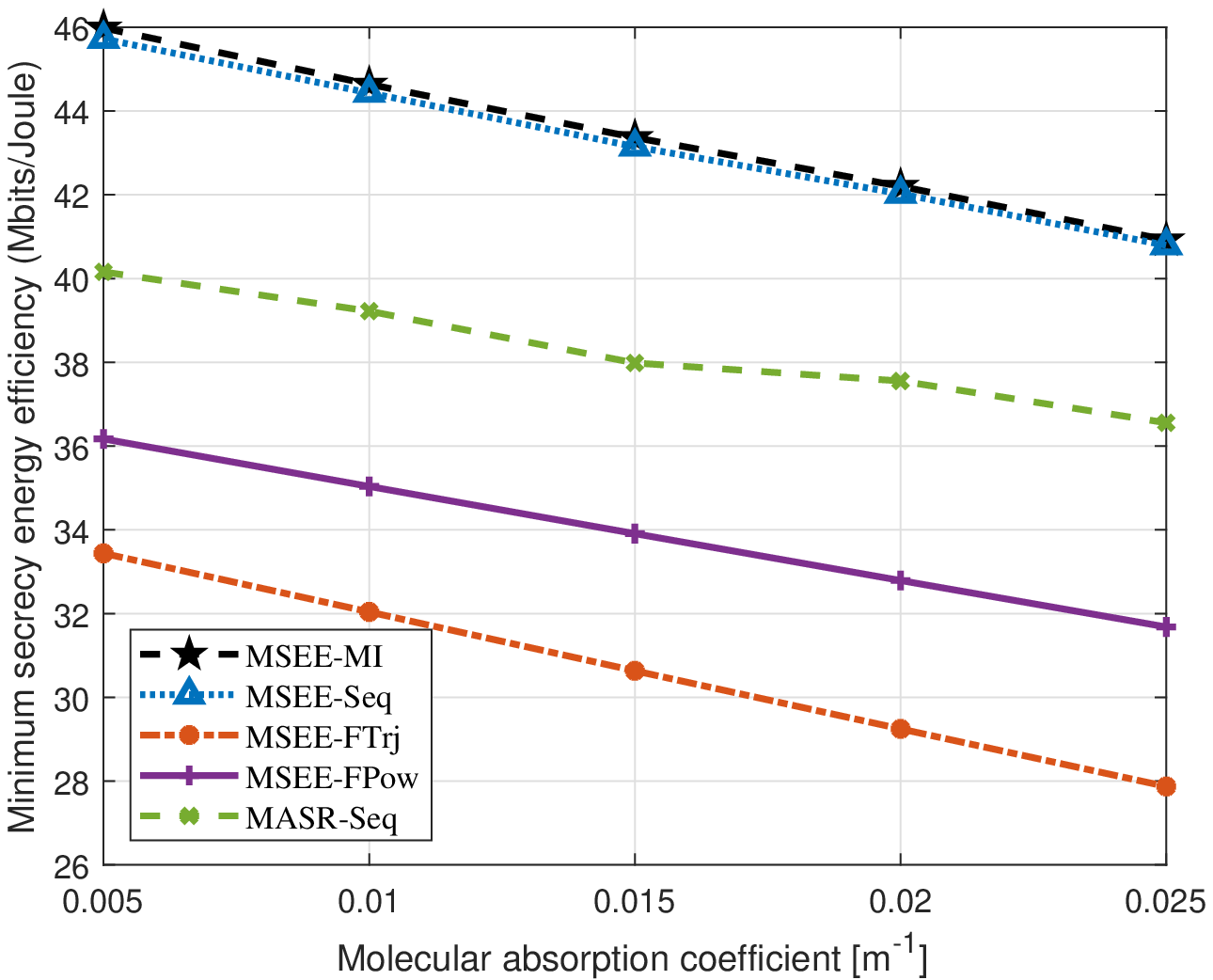}}
\caption{\textcolor{black}{Effect of THz molecular absorption on the MSEE.}}
\label{sim:fig7}
\end{figure}
\textcolor{black}{Fig. \ref{sim:fig7} depicts how the MSEE performance varies when the molecular absorption coefficient of THz links changes from $a_f=0.005$ to $a_f=0.025$ (similar range is also adopted in \cite{xu2021joint}), which can be physically translated to different carrier frequencies and environmental effects.  From the figure, we can see that the larger the molecular absorption coefficient, the lower the MSEE performance for all the schemes due to higher propagation loss arising from severe molecular absorption.  It is worth pointing out that the increased propagation loss results in the reduction of not only UUR's information leakage, but also BS's reception quality. Nonetheless, our proposed \textit{MSEE-MI} and \textit{MSEE-Seq} schemes substantially outperform the others regardless of the environmental conditions and operational carrier frequency of THz links. }

\begin{figure}[t]
\centerline{\includegraphics[width= 0.8\columnwidth]{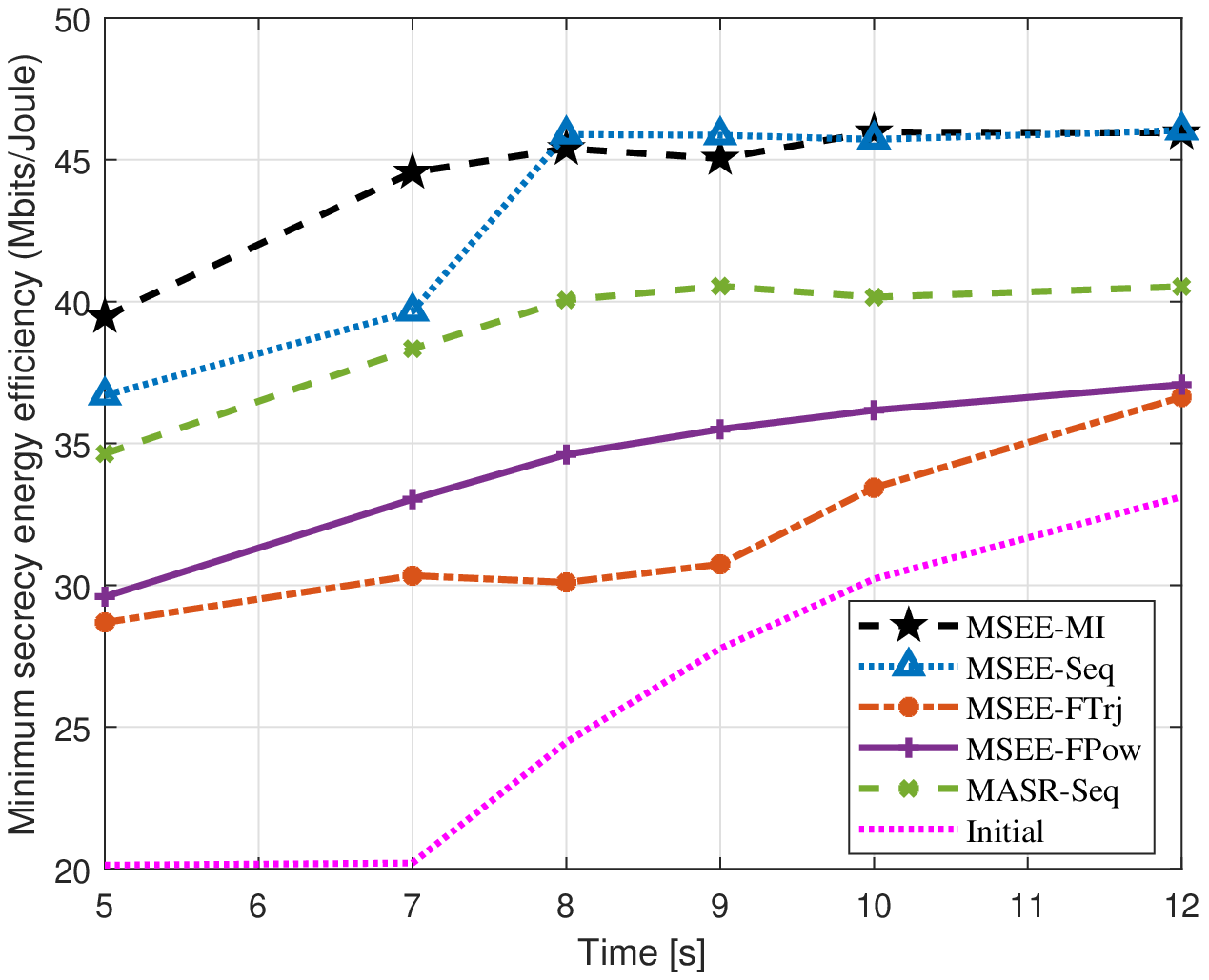}}
\caption{\textcolor{black}{Effect of mission time on the MSEE performance.}}
\label{sim:fig8}
\end{figure}
\textcolor{black}{In Fig. \ref{sim:fig8}, we investigate how the mission time $T$ impacts the MSEE performance for different schemes. Evidently, for the fixed trajectory schemes, i.e., \textit{MSEE-FTj} and the non-optimal initial feasible scheme, labeled as \textit{Initial}, as $T$ increases, MSEE also gets increased. however, such monotonically increasing trend is not observed on the other schemes. Indeed, when the mission time increases, the MSEE performance improves due to more time for secure communications and adjusting flight parameters. However, the higher the mission time, the larger the mechanical power consummation. Therefore, the overall trade-ff between these two phenomena, i.e., MASR and AFPC, results in the fact that the MSEE performance does not get monotonically increased, though following an overally increasing trend, as $T$ rises, e.g.,  from $8$ to $9$ s of \textit{MSEE-MI} and \textit{MSEE-Seq} curves,   in contrast to the MASR metric studied in \cite{tatarmamaghani2021joint}. This illustrates the significance of  considering propulsion power consumption for designing secure energy-efficient UAV-enabled systems. We also note that for a particular MSEE requirement, minimizing the task completion time of the considered UUR-system appears an interesting problem and requires deep investigation.}



\begin{figure}[t]
\centerline{\includegraphics[width= 0.8\columnwidth]{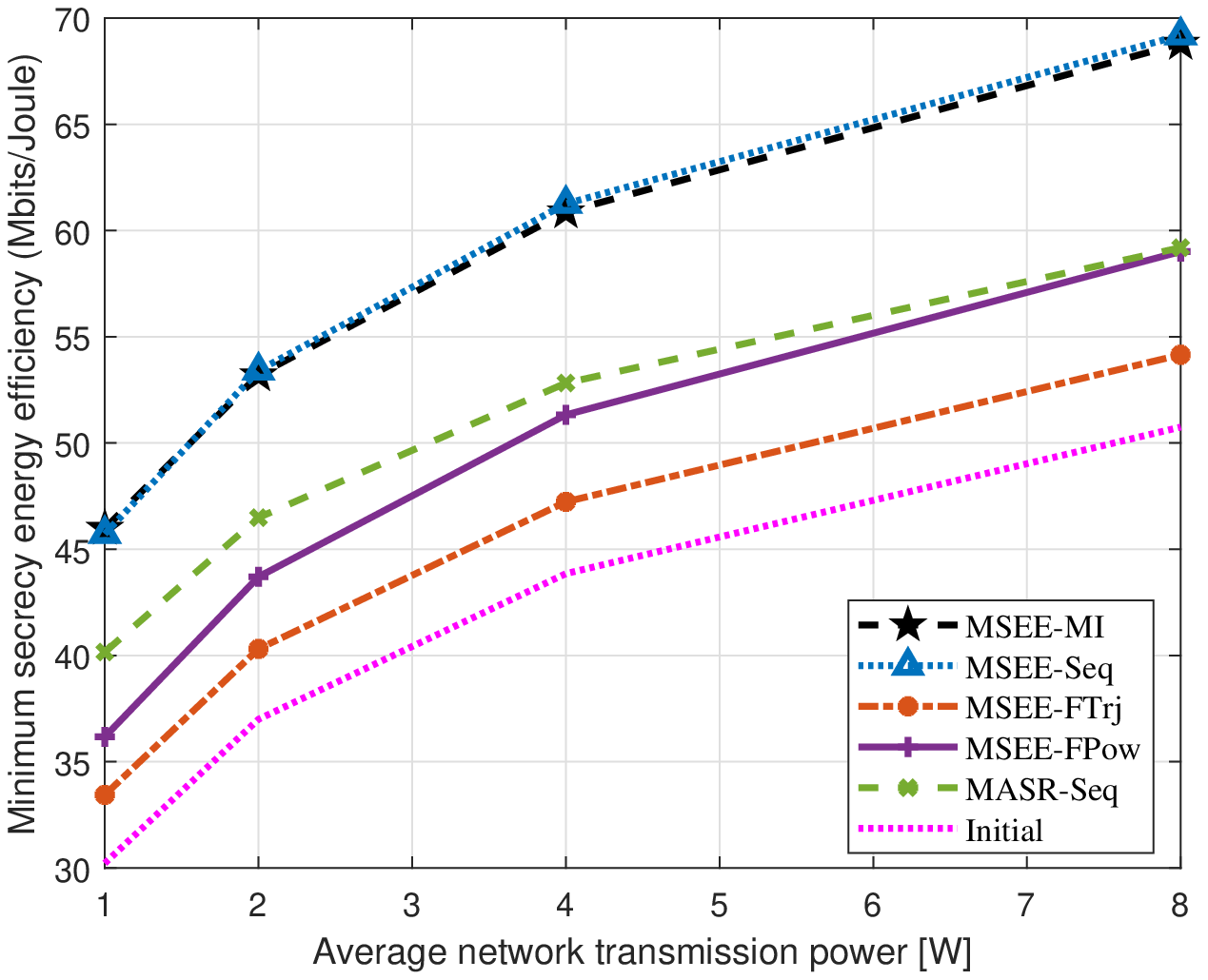}}
\caption{\textcolor{black}{Effect of average network transmission power on the MSEE performance.}}
\label{sim:fig9}
\end{figure}

\textcolor{black}{Fig. \ref{sim:fig9} illustrates the performance of MSEE with the increase of average network transmission power (ANTP) parameter, defined as $p^{ave}$ such that $p^{ave}_k = 0.1 p^{ave}$, $p^{ave}_b = 0.5 p^{ave}$, $p^{ave}_u = 0.4 p^{ave}$. The curve labeled as \textit{Initial} belongs to the MSEE performance of non-optimal feasible initialization. We again observe that the performance gaps that our joint designs offer are significantly higher than the other benchmark designs, and the relative gap, interestingly, slightly increases as the ANTP gets larger. For example, the \textit{MSEE-MI} scheme improves the MSEE performance by approximately $16$ Mbits/Joule when $p^{ave}=1$ W; however, around $19$  Mbits/Joule enhancement is achieved with $p^{ave}=8$ W, in  comparison with the \textit{Initial}  scheme. }

\begin{figure}[t]
\centerline{\includegraphics[width= 0.8\columnwidth]{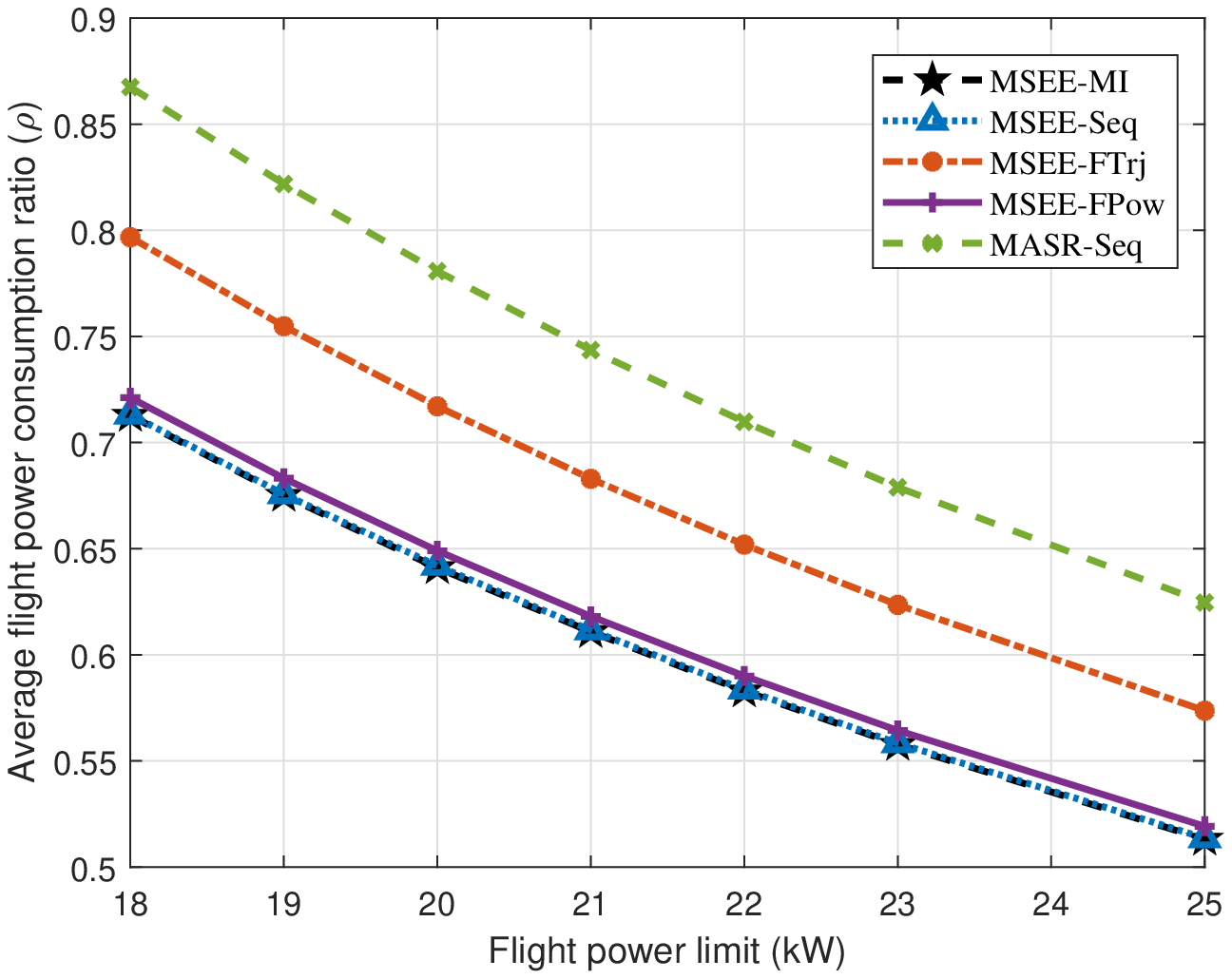}}
\caption{\textcolor{black}{Effect of flight power limit on the average flight power consumption ratio.}}
\label{sim:fig10}
\end{figure}
\textcolor{black}{In Fig. \ref{sim:fig10}, the average flight power consumption ratio (AFPCR), defined as $\rho \treq \frac{\bar{\mathbf{P}}_f}{\bar{P}_{lim}}$ is plotted against the flight power limit. It can be seen that the MSEE designs are more conservative than the \textit{MASR-Seq} scheme when it comes to the AFPCR performance, particularly, both \textit{MSEE-MI} and \textit{MSEE-Seq} achieve the least ANTPR, comparatively. Further, we can construe that as the flight power limit decreases, the larger proportion of the dedicated UAV's on-board battery resource is consumed  throughout the mission.}

\section{Conclusions}\label{sec:conclusion}
In this paper, we investigated the challenging task of designing
an energy-efficient THz-UUR system for secure and periodically data delivering from multiple ground UEs towards the BS. For the fairness of QoS amongst the UEs, a MSEE maximization problem was formulated, by which the fundamental system parameters are designed to improve the overall system secrecy and energy-efficiency performance. This was formally posed as a challenging mixed-integer non-convex nonlinear maximin optimization problem.  We then embarked on tackling the non-convexity of the formulated problem and proposed low-complex  BCD-SCA-Dinkelbach based iterative algorithms to solve it suboptimally with guaranteed convergence. Simulation results confirmed the fast convergence of our proposed algorithms, demonstrated significant MSEE performance improvement than the other benchmarks, and provided insightful results in the optimized system parameters such as UUR’s trajectory and velocity pattern as well as communication resource allocations, including transmit power profiles and user scheduling. Also, the effects of mission time, and molecular absorption factors arising from the THz links on the system MSEE performance have been examined. \textcolor{black}{As future work, we will deeply investigate the dynamic topology of aerial platforms with more practical THz channel modeling while leveraging benefits of extreme directional beamforming for intelligent UUR systems as well as taking into account the mobility of terrestrial UE}.


\appendices
\numberwithin{equation}{section}
\makeatletter 
\newcommand{\section@cntformat}{Appendix \thesection:\ }
\makeatother

\section{Proof of Lemma \ref{lemma1}}\label{Appendix A}

Computing the Hessian matrices of given functions yields
\begin{align}
    \mathbf{H}_1 &=\nabla^2(Z_1) = \frac{c^2}{(x+cy)^2}\begin{bmatrix}
-\frac{y^2}{x} & y\\
y & -x
\end{bmatrix},\\
\mathbf{H}_2 &=\nabla^2(Z_2) =\begin{bmatrix}
-\frac{aby^2\sigma_3}{\sigma_2} & \sigma_1\\
\sigma_1 & -\frac{abx^2\sigma_3}{\sigma_2}
\end{bmatrix}, 
\end{align}
wherein $\nabla^2(\cdot)$ denotes the hessian operator, and $\sigma_1 \treq \frac{abxy\sigma_3}{\sigma_2}$, $\sigma_2 \treq (y+bx)^2((a+1)y+bx)^2$,
and $\sigma_3 \treq 2(1+a)y + (a+2)bx$. One can verify that both matrices $\mathbf{H}_1$ and $\mathbf{H}_2$ are negative semidefinite ($\mathbf{H}_{1(2)} \prec 0$), thereby functions $Z_1$ and $Z_2$ are jointly concave w.r.t $x$ and $y$. Accordingly, there exist global over-estimators for concave functions in \eqref{Z2_approx}, following the first-order convexity condition law \cite{cvx_boyd}. The proof is completed.

\section{Proof of Lemma \ref{lemma3}}\label{Appendix B}
Computing the gradients of given functions w.r.t $x$ and $y$ yields
\begin{align}
\nabla f_{41}(x, y) &= \left[\begin{array}{c}
-\frac{a}{{\left(a\,x+b\,y\right)}\,{\left(a\,x+b\,y+1\right)}}\\
-\frac{b}{{\left(a\,x+b\,y\right)}\,{\left(a\,x+b\,y+1\right)}}
\end{array}\right],\\
\nabla f_{42}(x, y) &= \left[\begin{array}{c}
-\frac{c\,y}{x\,{\left(c\,y+d\,x+x\,y\right)}}\\
-\frac{d\,x}{y\,{\left(c\,y+d\,x+x\,y\right)}}
\end{array}\right],\\
\nabla f_{43}(x) &= x\,{\mathrm{e}}^{e\,x} \,{\left(e\,x+2\right)},
\end{align}
Further, calculating the Hessian matrix of the functions $f_{41}(x,y)$ and $f_{42}(x,y)$ and the second order derivative of $f_{43}(x)$, we can reach
\begin{align}
 \hspace{-3mm}  \mathbf{H}_{41}\hspace{-1mm}\treq\hspace{-1mm} \nabla^2{f_{41}}(x,y)&\hspace{-1mm}=\hspace{-1mm}
\begin{array}{l}
\hspace{-1mm}\frac{\left(2\,a\,x+2\,b\,y+1\right)}{{{\left(ax+by\right)^2\left(ax+by+1\right)^2}}}\hspace{-1mm}\left[\begin{array}{cc}
a^2 & ab\\
ab & b^2
\end{array}\right]\hspace{-1mm},
\end{array}\\
  \hspace{-5mm} \mathbf{H}_{42}\hspace{-1mm}\treq\hspace{-1mm} \nabla^2{f_{42}}(x,y)\hspace{-1mm}&= \hspace{-1mm}
\begin{array}{l}
\hspace{-2mm}\left[\hspace{-2mm}\begin{array}{cc}
\frac{c\,y\,{\left(c\,y+2\,d\,x+2\,x\,y\right)}}{x^2 \,{{\left(c\,y+d\,x+x\,y\right)}}^2 } & \hspace{-2mm}  -\frac{c\,d}{{{\left(c\,y+d\,x+x\,y\right)}}^2 }\\
-\frac{c\,d}{{{\left(c\,y+d\,x+x\,y\right)}}^2} & \hspace{-2mm} \frac{d\,x\,{\left(2\,c\,y+d\,x+2\,x\,y\right)}}{y^2 \,{{\left(c\,y+d\,x+x\,y\right)}}^2 }
\end{array}\hspace{-2mm}\right],\\
\end{array}\\
\nabla^2 f_{43}(x) &= {\mathrm{e}}^{e\,x} \,{\left(e^2 \,x^2 +4\,e\,x+2\right)} \geq 0, \label{f43_cvx}
\end{align}
We can verify that the first-order and second-order determinants of $\mathbf{H}_{41}$ and $\mathbf{H}_{42}$ are all non-negative, and therefore, the Hessian matrices are positive semi-definite ($\mathbf{H}_{41(2)}\succ 0$), indicating that functions  $f_{41}(x, y)$, $f_{42}(x, y)$ are jointly convex w.r.t $x$ and $y$. Further, the convexity of $f_{43}(x,r)$ and $f_{44}(x,p)$ follows from \eqref{f43_cvx} and \cite[Lemma 1]{mamaghani2020improving}. Given these functions are all convex, one can use the first-order Taylor expansions at points $x_0$ and $y_0$ to reach the global tight lower-bounds and inequalities in Lemma \ref{lemma2}. The proof is completed. 

\bibliographystyle{IEEEtran}
\bibliography{References}

\begin{IEEEbiography}[{\includegraphics[width=1in,height=1.25in,clip,keepaspectratio]{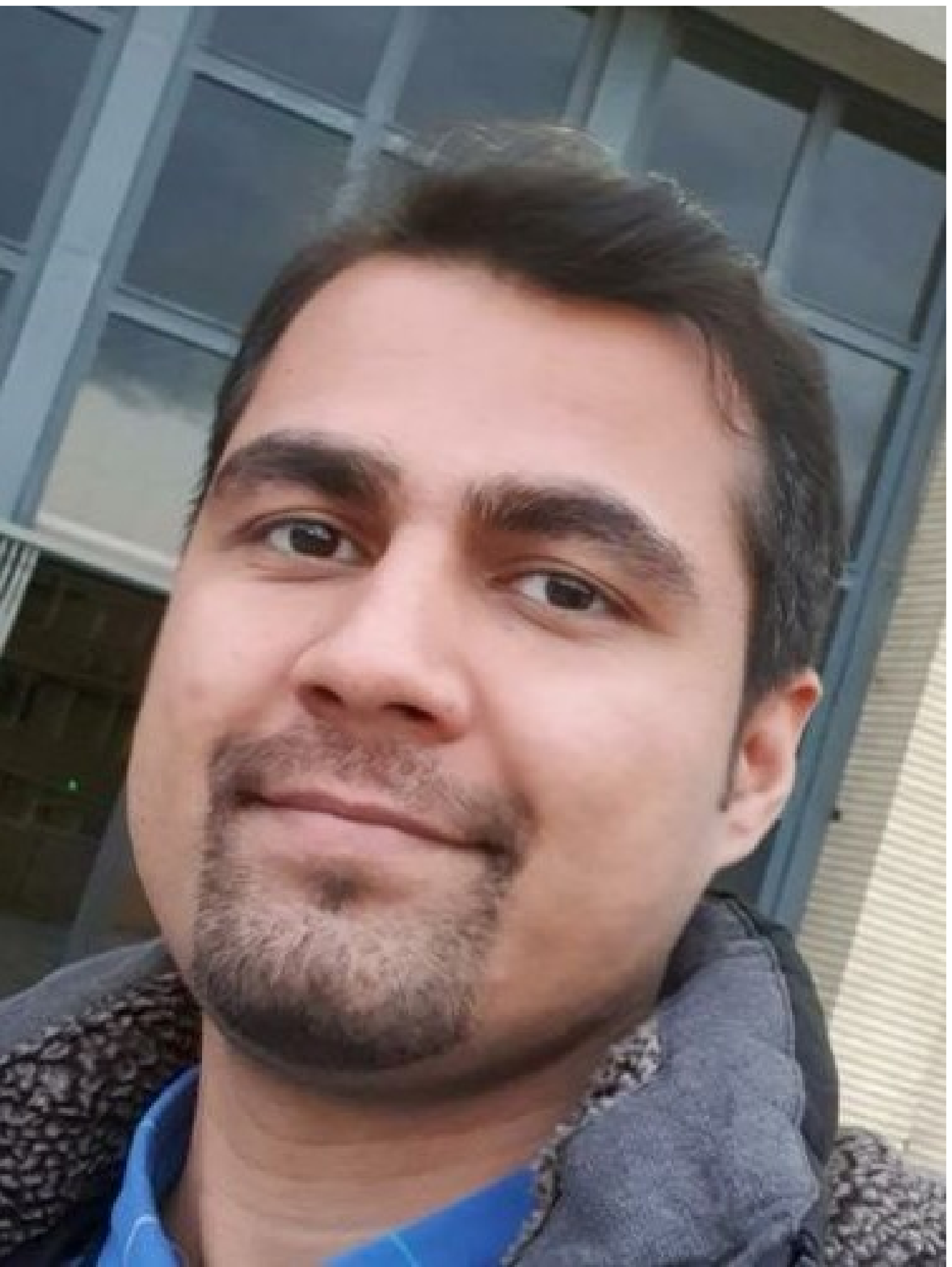}}]{Milad Tatar Mamaghani}(GS'20) was born in Tabriz, Iran, on May 12, 1994. He earned dual B.Sc. (Hons) degrees in electrical engineering fields - Telecommunications and Control - from the Amirkabir University of Technology, Tehran, Iran, in 2016 and 2018, respectively. He is currently pursuing the Ph.D. degree with the Department of Electrical and Computer Systems Engineering, Monash University, Melbourne, Australia. He is the author of several papers published in prestigious journals, and has served as a volunteer reviewer of various reputable publication venues such as TWC, TIFS, TVT, TCOM, TCCN, TMC, ISJ, Access, WCL, etc. His research interests mainly focus on beyond 5G wireless communications and networking, physical-layer security, UAV communications, optimization, and artificial intelligence. He is a member of the IEEE Communications Society and the IEEE Signal Processing Society.
\end{IEEEbiography}

 \vskip 0pt plus -1fil

\begin{IEEEbiography}[{\includegraphics[width=1in,height=1.25in,clip,keepaspectratio]{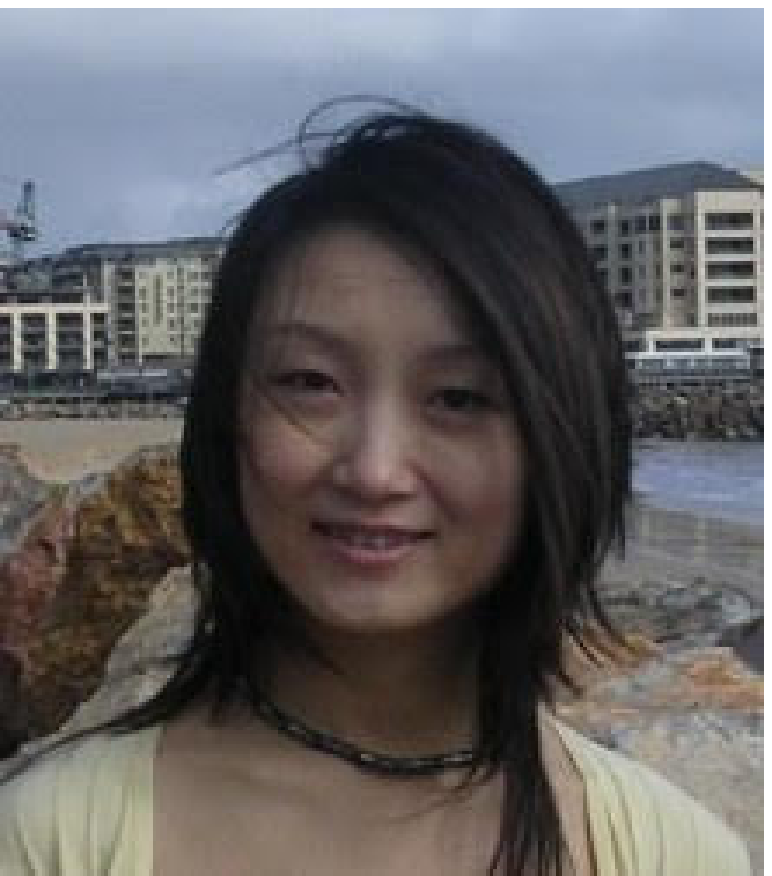}}]%
{Yi Hong}(S'00--M'05--SM'10)
is currently an Associate Professor at the Department of Electrical and Computer Systems Eng.,
Monash University, Melbourne, Australia.
She obtained her Ph.D. degree in Electrical Engineering and Telecommunications 
from the University of New South Wales (UNSW), Sydney, and received   
the {\em NICTA-ACoRN Earlier Career Researcher Award} at the 2007 {\em Australian Communication
Theory Workshop}, Adelaide, Australia. She served on the Australian Research Council College of Experts (2018-2020). Yi Hong is currently an Associate Editor for {\em IEEE Transactions on Green Communications and Networking}, and was the Associate Editor for {\em IEEE Wireless Communications Letters} and {\em Transactions on Emerging Telecommunications Technologies (ETT)}.
She was the General Co-Chair of {\em IEEE Information Theory Workshop} 2014, Hobart; the Technical Program Committee Chair of
{\em Australian Communications Theory Workshop} 2011, Melbourne; and the Publicity Chair
at the {\em IEEE Information Theory Workshop} 2009, Sicily. She was a Technical Program Committee member for
many IEEE leading conferences. Her research interests include
communication theory, coding and information theory with applications to telecommunication engineering.
\end{IEEEbiography}

\end{document}